\DeclareMathAlphabet{\mathcal}{OMS}{cmsy}{m}{n}
\newtheorem{lemma}{Lemma}
\newtheorem{theorem}{Theorem}
\numberwithin{corollary}{section}
\theoremstyle{definition}
\newtheorem{definition}{Definition}
\numberwithin{definition}{section}
\newtheorem{assumption}{Assumption}
\newcommand{\ouralg}{{\texttt{LAC}}\xspace}
\newcommand{\dcl}{{\texttt{DCL}}\xspace}
\newcommand{\nn}{n}
\newcommand{\nm}{m}
\newcommand{\nk}{k}
\newcommand{\nt}{T}
\newenvironment{myprocedure}[1][htb]{%
    \renewcommand{\ALG@name}{Procedure}
   \begin{algorithm}[#1]%
  }{\end{algorithm}}
\begin{document}
\title{Learning-Augmented Control: Adaptively  Confidence Learning for Competitive MPC}
\author{Tongxin Li, \IEEEmembership{Member, IEEE}
\thanks{Tongxin Li is with the School of Data Science, The Chinese University of Hong Kong, Shenzhen 518172 China (e-mail: \texttt{litongxin@cuhk.edu.cn}). }}

\maketitle

\begin{abstract}
We introduce \textsc{L}earning-\textsc{A}ugmented \textsc{C}ontrol (\ouralg), an approach that integrates untrusted machine learning predictions into the control of constrained, nonlinear dynamical systems. \ouralg is designed to achieve the ``best-of-both-worlds'' guarantees, i.e, near-optimal performance when predictions are accurate, and robust, safe performance when they are not.  The core of our approach is a delayed confidence learning procedure that optimizes a confidence parameter online, adaptively balancing between ML and nominal predictions. We establish formal competitive ratio bounds for general nonlinear systems under standard MPC regularity assumptions. For the linear quadratic case, we derive a competitive ratio bound that is provably tight, thereby characterizing the fundamental limits of this learning-augmented approach.  The effectiveness of \ouralg is demonstrated in numerical studies, where it maintains stability and outperforms standard methods under adversarial prediction errors.
\end{abstract}


\section{Introduction}
\label{sec:introduction}

\IEEEPARstart{P}{redictive}  information is fundamental to controlling dynamical systems, enabling methods like Model Predictive Control (MPC) to optimize actions based on anticipated future behavior~\cite{garcia1989model}.
While MPC has widespread applications across numerous engineering domains, its performance critically depends on the quality of these predictions. In the current era of artificial intelligence, these predictions increasingly originate from sophisticated yet inherently uncertain sources, particularly off-the-shelf Machine Learning (ML) models. For instance, ML-generated predictions are utilized in data-driven MPC (see~\cite{rosolia2018data,hewing2020learning} and~\cite{bold2024data}) and model-based reinforcement learning~\cite{park2023predictable}, while other forms of auxiliary predictive advice like human feedback are employed in applications including mobile robotics~\cite{chipalkatty2013less} and fair motion planing~\cite{villa2025fair}.
A major limitation, however, is that these ML-based and other predictive sources often lack formal guarantees on their accuracy. This uncertainty introduces significant risks, especially in dynamic, time-varying systems, potentially leading to degraded performance or even instability. 
 
The performance and robustness of MPC under uncertain inputs have been extensively studied in the literature, often focusing on sensitivity analysis within given uncertainty bounds; see,
e.g.,~\cite{cannon2005optimizing,quevedo2010input,fleming2014robust,lorenzen2019robust,hanema2020heterogeneously,shin2021controllability,lin2022bounded}, and~\cite{shin2022exponential}.  In many real-world applications, advanced complex, data-driven predictors often produce outputs whose accuracy is difficult to verify a priori, and prediction errors may only appear occasionally. 
While theoretically sound, existing approaches become overly conservative because simple uncertainty models are inadequate for capturing rare but significant prediction errors. 
Consequently, this creates a critical dilemma for control design: (1) leverage potentially superior but uncertain predictions, risking poor outcomes if they are wrong, or (2) rely on safe but suboptimal nominal predictions. The latter approach is frequently adopted in practice to ensure system safety and predictability, albeit at the cost of performance.

Motivated by this practical challenge, we seek to develop an MPC framework that escapes this dichotomy by providing the best of both worlds: (1).
Exploit good predictions \textbf{consistently} to achieve near-optimal performance; (2).
Ensure performance remains close to that of using only nominal predictions \textbf{robustly} when prediction quality degrades significantly (often in the worst-case). 
\begin{figure}[t]
    \centering    \includegraphics[width=0.95\linewidth]{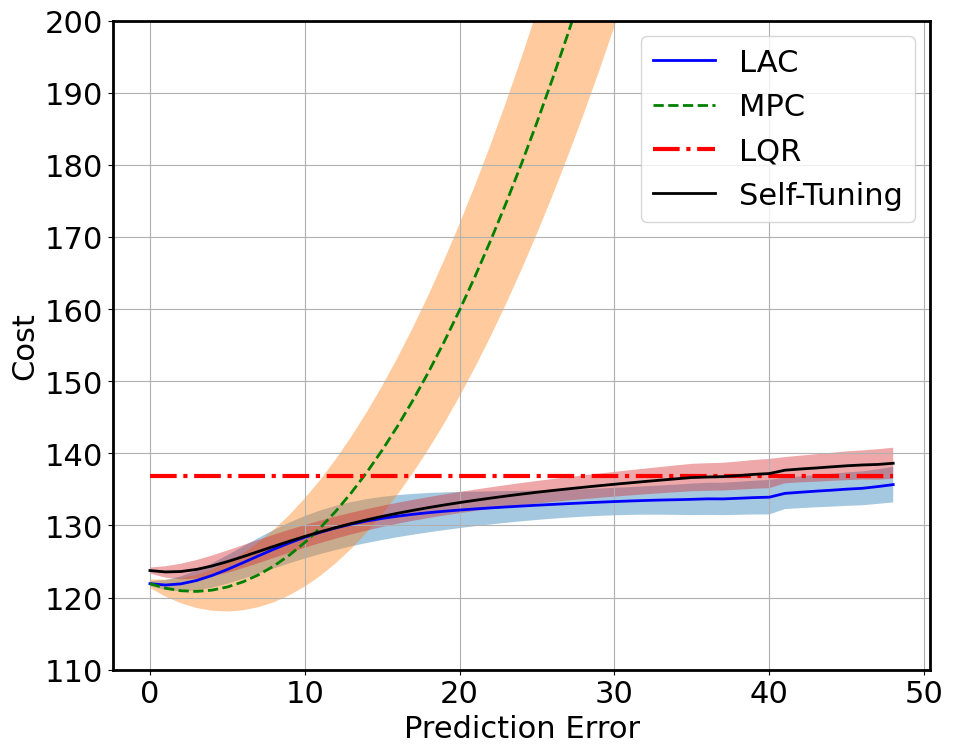}
    \caption{Illustrative comparison of closed-loop control cost versus prediction error for Learning-Augmented Control (\ouralg, this work), \texttt{MPC}, \texttt{LQR}, and \texttt{Self-Tuning}~\cite{li2022robustness}. \ouralg (blue solid) and \texttt{Self-Tuning} (black solid) exemplify the desired ``best-of-both-worlds'' behavior: they achieve near-optimal low cost comparable to \texttt{MPC} when predictions are good (low error) and robustly maintain cost close to the \texttt{LQR} baseline when prediction quality is poor (high error). Shaded regions represent standard deviation over multiple simulation runs. In Section~\ref{sec:linear_settings}, detailed experimental settings are provided.}
    \vspace{-10pt}
    \label{fig:lqc}
\end{figure}
An example is shown in Figure~\ref{fig:lqc}, where a
Linear Quadratic Regulator (\texttt{LQR}, red dash-dot) uses only baseline nominal predictions by regarding future system disturbances as zero, yielding \textit{robust} but potentially suboptimal cost. Classic \texttt{MPC} (green dashed) directly uses potentially inaccurate predictions, achieving \textit{consistent} performance at low error but suffering significant performance degradation (high cost and variance) as error increases. Classic \texttt{MPC} and \texttt{LQR} cannot achieve (1) and (2) simultaneously.


Focusing on this need, various new approaches have sought to strike a balance between uncertain ML predictions and safer nominal ones. A preliminary algorithm termed \texttt{Self-Tuning} (black solid in Figure~\ref{fig:lqc}) was presented in~\cite{li2022robustness}, but its applicability is limited to time-invariant linear dynamics with quadratic costs, and its competitive ratio guarantee depends on quantifying the variation of predictions and system perturbations due to its Follow-the-Leader (FTL) basis. For more general time-varying systems, online policy selection via online optimization has been explored in~\cite{lin2023online} by tuning policy parameters. An adaptive regret of $O(\sqrt{\nt})$ is guaranteed by~\cite{lin2023online}, but the surrogate costs need to be convex and differentiable with respect to policy parameters. Recent results such as~\cite{shen2024combining} have used the disturbance response control to ensure the convexity of the stable controller space;~\cite{li2024disentangling} has applied online optimization algorithms to linear quadratic control problems with latent disturbances, yet the analysis requires additional assumptions to stabilize the key confidence parameter.
Consequently, there remains a need for a principled method with theoretical guarantees that achieves the dual ``best-of-both-worlds''  objectives above for general dynamical systems subject to uncertain ML predictions.

Our work also contributes to the evolving field of adaptive control, especially in its recent intersection with learning theory to address non-asymptotic metrics. While the adaptive control community has traditionally concentrated on Lyapunov stability and asymptotic convergence~\cite{slotine1991applied,dhar2021indirect,fukushima2007adaptive}, the newer trend represented in~\cite{abbasi2011regret},\cite{simchowitz2020naive},\cite{dean2019sample}, and~\cite{agarwal2019online} employs learning-theoretic measures such as competitive ratio and regret for finite horizon analysis. In particular,~\cite{abbasi2011regret,agarwal2019online,lu2025almost,zhao2025data} have applied regret minimization to provide finite-time performance bounds and~\cite{dean2019sample,simchowitz2020naive,karapetyan2025closed,muthirayan2025meta} have established sample complexity guarantees for online control. In contrast to those recent results that rely on statistical assumptions and weaker regret metrics, our work employs competitive analysis to design controllers robust to adversarial disturbances and evaluated against a true offline optimal.

\textbf{Contributions.}
To tackle the challenge of utilizing untrusted ML predictions in MPC, this work studies the \textit{Learning-Augmented Control } (LAC) problem for general time-varying nonlinear systems with constraints, where a controller receives potentially inaccurate, parameterized predictions of future system parameters $(\phi^\star_{\tau}:\tau\geq t)$ within a receding horizon framework. The core problem is that the prediction error  is unknown until the corresponding true parameters are revealed. Our framework aims to transform conventional MPC by introducing a robust layer of online confidence learning. 
In the first part
of this article, we propose a $\lambda_t$-confident control law, where $\lambda_t$ is a time-varying confidence parameter learned online. This mechanism explicitly balances leveraging potentially high-reward but untrusted ML predictions against relying on safe but conservative nominal predictions. We formulate the online learning of the confidence sequence $\{\lambda_t\}_{t=0}^{T-1}$ as a \textit{delayed online convex optimization} problem. We derive a per-step surrogate error bound, minimizing which aims to minimize the overall control cost. Crucially, evaluating the bound and its gradient at time $t$ requires knowledge of prediction errors up to time $t+k$ where $k$ is a prediction window size, information which only becomes available at time $t' = \min\{t+k, T\}$. This inherent \textit{delay of $k$ steps} necessitates a specialized online learning approach for the predictive control paradigm. We develop a \dcl algorithm (Algorithm~\ref{alg:dcl}), based on online mirror descent (or FTRL), which updates $\lambda_t$ using feedback received up to time $t-k$. This involves analyzing the problem as $k$ interleaved online learning subproblems.

Focusing on finite horizon competitive  analysis,  we provide theoretical guarantees for the \ouralg policy under standard MPC assumptions (SSOSC, LICQ, leading to EDPB in~\Cref{def:perturbation_bound}). We establish a competitive ratio bound, which formally characterizes the desired behavior: performance is near-optimal when cumulative prediction errors are small, and degrades gracefully relative to the errors otherwise (see \Cref{thm:lac}). Unlike existing self-tuning or robust control methods (e.g., \cite{li2022robustness},\cite{ mayne2005robust}, \cite{calafiore2012robust}, and~\cite{ berberich2020data}), the \ouralg framework  adapts directly to the realized errors. Our analysis for the LQC setting improves upon the bounds in \cite{li2022robustness} by removing dependency on perturbation variation terms, as shown in~\Cref{thm:lac_lqc}, and~\ref{thm:lqc_lower_bound}.

\textbf{Notation.}
We use $[t_1:t_2]$ to denote the set of integers $\{t_1, \ldots, t_2\}$. The symbol $\|\cdot\|$ denotes the $\ell_2$-norm and for a square matrix $M$, $\|\cdot\|_{M}^2\coloneqq (\cdot)^{\top}M(\cdot)$. Given a prediction window size $k$ and time horizon $T$, we define the time indices $\underline{t} \coloneqq \max\{t-k+1, 0\}$, $\overline{t} \coloneqq \min\{t+k-1, T-1\}$, and $t' \coloneqq \min\{t+k, T\}$. The set $\mathcal{I}$ represents the unit interval $[0,1]$, and $\mathbb{N}$ denotes the set of non-negative integers. For a set $\mathsf{S}$, $\mathsf{S}+a\coloneqq \{s\in\mathsf{S}:s+a\}$ with $a\in\mathbb{R}$. The largest and smallest eigenvalues are written as $\overline{\lambda}(\cdot)$ and $\underline{\lambda}(\cdot)$. The mappings $\overline{\sigma}(\cdot)$ and $\underline{\sigma}(\cdot)$ represent the largest and smallest singular values of a matrix, respectively.  To simplify the presentation, we adopt the asymptotic notation (Big O) $O,\Omega,\Theta$, $\omega$, and $o$ with respect to the problem horizon $\nt$.

\begin{table*}[ht]
\centering
\renewcommand{\arraystretch}{1.5} 
\begin{tabular}{l|l|l|l|l}
\specialrule{.09em}{.1em}{.1em}
 \textbf{System} & \textbf{Predictions} & \textbf{Nominal Values} & \multicolumn{2}{c}{\textbf{Competitive Ratio (CR) Bounds}} \\
\hline
\multirow{2}{*}{LQC} & $(\phi_t:t\in [\nt])$ given at $t=0$ & $(\kappa_t=0:t\in [\nt])$ & Theorem 4.1~\cite{li2022robustness} & $\mathsf{CR}(\pi) \leq 1+\frac{O(\|\boldsymbol{\varepsilon}\|)}{1+\Omega(\|\boldsymbol{\varepsilon}\|)}+O\left(\mu_{\mathrm{Var}}\right)$ \\
\cline{2-5} 
 & $(\phi_{t:\overline{t}|t}:t\in [\nt])$ (receding horizon) & $(\kappa_{t:\overline{t}|t}=0:t\in [\nt])$ & \textbf{Theorem 2 (This Work)} & $\mathsf{CR}(\pi_{\ouralg}) \leq 1+O\left(\frac{\|\boldsymbol{\varepsilon}\|}{\|\boldsymbol{\varepsilon}\|+\sqrt{\nt\log\nt}}\right) + o(1)$ \\
\hline
\multirow{2}{*}{General} & \multirow{2}{*}{$(\phi_{\tau|t}: t\leq \tau\leq \overline{t})$ (receding horizon)} & \multirow{2}{*}{$(\kappa_{\tau|t}: t\leq \tau\leq \overline{t})$} & \multirow{2}{*}{\textbf{Theorem 1 (This Work)}} & $\mathsf{CR}(\pi_{\ouralg})\leq 1+2\sqrt{\zeta}+\zeta$ \\
& & & & (Asymptotic) $\leq 1 +O\left(\sqrt{\varpi}+\varpi\right)$ \\
\specialrule{.09em}{.1em}{.1em}
\end{tabular}
\caption{Comparison of competitive ratio bounds for learning-augmented control. Our results (Theorems 1 and 2) provide guarantees for general nonlinear systems with receding horizon predictions and improve upon existing bounds for LQC systems that use one-shot predictions by removing dependency on variation terms ($\mu_{\mathrm{Var}}$). The terms $\zeta$ and $\varpi$ are defined in Section~\ref{sec:main}, Equation~\eqref{eq:def_zeta}.}
\label{tab:results_comparison}
\end{table*}

\section{Preliminaries and Problem Formulation}
\label{sec:problem_formulation}
We consider a finite-time discrete dynamical system, with a set $[\nt]\coloneqq \{0,\ldots,\nt-1\}$ denoting all time steps. Let $x_t\in\mathbb{R}^{\nn}$ be a system state and $u_t\in\mathbb{R}^{\nm}$ an action from a state feedback control policy $\pi_t$ at time $t\in [\nt]$. A controller $\pi\coloneqq (\pi_t:t\in [\nt])$ is a collection of control policies for all $t$. The system update rule is given by
\begin{align}
\label{eq:dynamics}
   x_{t+1} = f_t\left(x_t,u_t;\phi_t^{\star}\right), \quad t\in [\nt], 
\end{align}
subject to
\begin{align}
\label{eq:constraints}
h_t\left(x_t,u_t;\phi_t^{\star}\right)\leq 0, \
x_t\in\mathsf{X}, \ u_t\in\mathsf{U}, \quad t\in [\nt],
\end{align}
where $\mathsf{X}\subseteq\mathbb{R}^{\nn}$ and $\mathsf{U}\subseteq\mathbb{R}^{\nm}$. An initial state $x_0\in\mathsf{X}$ is given at time $t=0$.
The total cost of a controller $\pi\coloneqq (\pi_t:t\in [\nt])$ is
\begin{align}
\label{eq:costs}
       J(\pi) \coloneqq \sum_{t=0}^{\nt-1}c_t\left(x_t,u_t;\phi_t^{\star}\right) + c_{\nt}\left(x_{\nt};\phi_T^{\star}\right).
\end{align}

The dynamics of the system $f_t:\mathbb{R}^{\nn}\times\mathbb{R}^{\nm}\times \Phi\rightarrow \mathbb{R}^\nn$ in~\eqref{eq:dynamics}, cost $c_t:\mathbb{R}^{\nn}\times\mathbb{R}^{\nm}\times \Phi\rightarrow \mathbb{R}^\nn$, and the terminal cost $c_{\nt}:\mathbb{R}^{\nn}\times \Phi\rightarrow \mathbb{R}^\nn$ in~\eqref{eq:costs} are parameterized by unknown variables in a convex \textit{uncertainty set} $0\in\Phi\subseteq\mathbb{R}^{d}$ with a diameter $\gamma>0$, e.g., $\sup_{\phi,\varphi\in\Phi}\|\phi-\varphi\|\leq\gamma$. 
At each time $t\in [\nt]$, after receiving $x_t$ and before generating an action $u_t$, predictions of future ground-truth parameters $\phi^{\star}_{t:\overline{t}}\coloneqq \left(\phi_t^{\star},\ldots,\phi_{\overline{t}}^{\star}\right)$ are given and written as $\phi_{t:\overline{t}|t}\coloneqq (\phi_{t|t},\ldots,\phi_{\overline{t}|t})$ (denoting $\smash{\overline{t}\coloneqq\min\{t+\nk-1,\nt-1\}}$). In addition, the parameters have nominal values $\kappa_{t:\overline{t}|t}\coloneqq (\kappa_{t|t},\ldots,\kappa_{\overline{t}|t})$.
The true values of $\left(\phi^{\star}_{\tau}:\tau\leq t\right)$ are revealed to the controller at time $t+1$.

Let $\mathbf{f}\coloneqq (f_t:t\in [\nt])$, $\mathbf{h}\coloneqq (h_t:t\in [\nt])$, and $\mathbf{c}\coloneqq (c_t:t\in [\nt]\cup\{\nt\})$.
Throughout this paper, we assume the following standard requirements on the system. 

\begin{assumption}[Stabilizability and $C^2$]
\label{ass:basic}
The system $(\mathbf{f},\mathbf{h})$ is stabilizable.
All cost functions in $\mathbf{c}$, dynamical functions in $\mathbf{f}$, and constraints in $\mathbf{h}$ are twice continuously differentiable ($C^2$) with respect to actions, states, and unknown variables. The cost functions are nonnegative, convex, and $\ell$-smooth.
\end{assumption}


We further assume the existence of a convex control-invariant set with respect to the uncertainty set $\Phi$, which can be constructed by standard robust control techniques.

\begin{assumption}
[Control-Invariance]\label{ass:control_invariant}
The set ${\mathsf{X}}$ is control invariant such that for any $x\in {\mathsf{X}}$, we can find some $u\in\mathsf{U}$ satisfying $h_t(x,u;\phi)\leq 0$ and $f_t(x,u;\phi)\in{\mathsf{X}}$ for any $t\in [\nt]$ and $\phi\in\Phi$.  Moreover, the convex uncertainty set $\Phi$ is robust such that for any $t\in [\nt]$ if $h_t(x,u;\varphi)\leq 0$ and $f_t(x,u;\varphi)\in{\mathsf{X}}$ for some $x\in\mathsf{X}$, $u\in\mathsf{U}$, and $\varphi$, then $h_t(x,u;\phi)\leq 0$ and $f_t(x,u;\phi)\in{\mathsf{X}}$ for all $\phi\in\Phi$. 
\end{assumption}

\subsection{Preliminaries: MPC, Assumptions, and LQC}

\subsubsection{Model Predictive Control} 
\label{sec:model_mpc}
Given predictions in $\phi_{t:\overline{t}|t}$ at time $t\in [\nt]$, a natural scheme to obtain an action $u_t$ is the MPC scheme~\cite{garcia1989model,zheng1995stability,lin2022bounded}, described by the following optimization:
\begin{subequations}
\begin{align}
\label{eq:mpc1}
\min_{u_{t:\overline{t}},x_{t+1:t'}}  \sum_{\tau=t}^{\overline{t}}  c_{\tau}(x_\tau,u_\tau;\phi_{\tau|t}) + & c_{\nt}(x_{t'};\phi_{\nt|t})\\
\label{eq:mpc2}
    \text{s.t. }  x_{\tau + 1} =  f_{\tau}(x_{\tau},u_{\tau};\phi_{\tau|t}), \ & \tau=t,\ldots,\overline{t},\\
\label{eq:mpc3}
0 \geq  h_{\tau}(x_{\tau},u_{\tau};\phi_{\tau|t}), \ & \tau=t,\ldots,\overline{t},\\
\label{eq:mpc4}
x_{t'},  x_\tau\in\mathsf{X}, \ u_{\tau}\in\mathsf{U}, \ & \tau=t,\ldots,\overline{t}.
\end{align}
\label{eq:mpc}
\end{subequations}
In~\eqref{eq:mpc1}-\eqref{eq:mpc4}, a prediction window size $\nk$ is used and the terminal cost $c_{\nt}(x_{t'})$ depends on the state $x_{t'}$. 
For notational convenience, we denote the optimal solution corresponding to $u_\tau$ with $t\leq\tau\leq\overline{t}$ of the optimization above by $u_\tau(x_t;\phi_{t:\overline{t}|t})$ where $\phi_{t:\overline{t}|t}$ encapsulates predictions of future $\nk$ dynamics and costs, as well as $\phi_{\nt|t}$. It is further simplified as $u_t(\phi_{t:\overline{t}|t})$ if there is no ambiguity on the initial state $x_t$. Similarly, we let $x_\tau(x_{t};\phi_{t:\overline{t}|t})$ with $t+1\leq\tau\leq t'$ be the corresponding state as an optimal solution to~\eqref{eq:mpc}.




\subsubsection{Second and First-Order Assumptions}
\label{sec:second_first_assumption}
We simplify the primal variables and dual variables in~\eqref{eq:mpc} as $\mathbf{p}_t\coloneqq (u_{t:\overline{t}},x_{t+1:t'})$ and $\mathbf{d}_t$, respectively. The input parameter is $\mathbf{r}_t\coloneqq (\phi_{t:\overline{t}},x_t)$.
Consider the Lagrangian $\mathcal{L}\left(\mathbf{p}_t,\mathbf{d}_t;\mathbf{r}_t\right)$ of the optimization~\eqref{eq:mpc1}-\eqref{eq:mpc3}. 
We present regular assumptions on the MPC formulation~\eqref{eq:mpc1}-\eqref{eq:mpc4}. We define $\mathbf{z}_t(\mathbf{p}_t)$ as a \textit{primal-dual} solution of~\eqref{eq:mpc} if $\mathbf{p}_t$ satisfies the first-order optimality conditions with Lagrange multiplier $\mathbf{d}_t$. Abusing the notation, the primal-dual variable is written as $\mathbf{z}_t\coloneqq (\mathbf{p}_t,\mathbf{d}_t)$.

\begin{assumption}[MPC Regularity~\cite{robinson1980strongly,shin2022exponential}]
    \label{ass:mpc}
For all $t\in [\nt]$, $\phi\in\Phi$, and $x_t\in\mathsf{X}$, the primal-dual solution $\mathbf{z}_t$ of the MPC scheme~\eqref{eq:mpc1} satisfies SSOSC
and LICQ.
\end{assumption}

The Strong Second Order Sufficient Conditions (SSOSC) ensure that the  
reduced Hessian is positive-definite~\cite{robinson1980strongly,shin2022exponential}, i.e., the Hessian $\nabla_{\mathbf{p}_t\mathbf{p}_t}^2\mathcal{L}\left(\mathbf{z}_t;\mathbf{r}_t\right)$ is positive-definite on the null space defined by~\eqref{eq:mpc2} and active constraints in~\eqref{eq:mpc3} with nonzero duals. Moreover, Linear Independence Constraint Qualification (LICQ)~(see discussions in~\cite{robinson1980strongly,shin2022exponential})  ensures the multipliers are unique and the KKT system is well-posed. The LICQ requires that the gradients in the constraint Jacobian defined by equality and active inequality constraints are linearly independent at the optimal primal-dual solution. In particular, borrowing the notation in~\cite{shin2022exponential}, these conditions are given by
\begin{align}
\label{eq:ssosc}
\tag{SSOSC}
\mathrm{ReH}\left(\nabla_{\mathbf{p}_t\mathbf{p}_t}^2\mathcal{L}\left(\mathbf{z}_t(\mathbf{r}_t);\mathbf{r}_t\right),\nabla_{\mathbf{p}_t}\overline{\mathbf{g}}\left(\mathbf{p}_t;\mathbf{r}_t\right)\right) &\succ 0,\\
\label{eq:licq}
\tag{LICQ}
\underline{\sigma}\left(\nabla_{\mathbf{p}_t}\mathbf{g}\left(\mathbf{p}_t;\mathbf{r}_t\right)\right) &> 0,
\end{align}
where $\nabla_{\mathbf{p}_t}\mathbf{g}\left(\mathbf{p}_t;\mathbf{r}_t\right)$  collects the gradients of active constraints $\nabla_{\mathbf{p}_t}f_{\tau}(\mathbf{p}_t;\phi_{\tau|t},x_t)$ and $\nabla_{\mathbf{p}_t}h_{\tau}(\mathbf{p}_t;\phi_{\tau|t},x_t)$ in~\eqref{eq:mpc} where $h_{\tau}(x_{\tau},u_\tau;\phi_{\tau|t})=0$, for $\tau=t,\ldots,\overline{t}$; $\nabla_{\mathbf{p}_t}\overline{\mathbf{g}}\left(\mathbf{p}_t;\mathbf{r}_t\right)$ contains a subset of rows of $\nabla_{\mathbf{p}_t}\mathbf{g}\left(\mathbf{p}_t;\mathbf{r}_t\right)$ by enforcing the corresponding entries in the Lagrangian multiplier $\mathbf{d}_t$ to be nonzeros. The reduced Hessian operator is defined as $\mathrm{ReH}(H,M)\coloneqq Z_M^{\top} H Z_M$ where $Z_M$ is the null-space matrix of $M$.


Note that for Linear Time-Varying (LTV) systems, uniform controllability implies the LIQC~\cite{shin2021controllability}. We further assume the following singular spectrum bounds on the Hessian sub-matrices of the Lagrangian hold, which are necessary to guarantee
sufficiently fast decay~\cite{shin2022exponential}. Let $H_1(\mathbf{r}_t)\coloneqq \nabla_{\mathbf{z}_t\mathbf{z}_t}^2\mathcal{L}\left(\mathbf{z}_t(\mathbf{r}_t);\mathbf{r}_t\right)$ and $H_2(\mathbf{r}_t)\coloneqq\nabla_{\mathbf{z}_t\mathbf{p}_t}^2\mathcal{L}\left(\mathbf{z}_t(\mathbf{r}_t);\mathbf{r}_t\right)$.

\begin{assumption}[Singular Spectrum Bounds~\cite{shin2022exponential}]
\label{ass:singular_bounds}
For all $t\in [\nt]$, $\phi\in\Phi$, and $x_t\in\mathsf{X}$, the Hessian matrices $H_1(\mathbf{r}_t)$ and $H_2(\mathbf{r}_t)$ satisfy
    \begin{align}
        \label{eq:hessian_1}
    \overline{\sigma}\left(H_1(\mathbf{r}_t)\right), \overline{\sigma}\left(H_2(\mathbf{r}_t)\right) &\leq \overline{\overline{\sigma}},\\
        \label{eq:hessian_2}
     \underline{\sigma}\left(H_1(\mathbf{r}_t)[\mathcal{B},\mathcal{B}]\right) &\geq \underline{\underline{\sigma}}, \text{ for all } \mathcal{B}_{>0}\subseteq\mathcal{B}\subseteq\mathcal{B}_{\geq 0}
    \end{align}
    for some constants $\overline{\overline{\sigma}},\underline{\underline{\sigma}}>0$, where $H_1(\mathbf{r}_t)[\mathcal{B},\mathcal{B}]\coloneqq (H_1(\mathbf{r}_t)(ij):i,j\in\mathcal{B})$ is a submatrix of $H_1(\mathbf{r}_t)$, with $\mathcal{B}$ defined as a set of indices that consists of $\mathcal{B}_{>0}$, a set of primal-dual indices corresponding to all primal entries and nonzero dual entries of active constraints. Similarly, $\mathcal{B}_{\geq 0}$ is a set of primal-dual indices corresponding to all primal entries and dual entries of active constraints.
\end{assumption}

Note that~\eqref{eq:hessian_1} and~\eqref{eq:hessian_2} together imply the bounds defined by (4.16a), (4.16b) and (4.16c) in Theorem 4.5 in~\cite{shin2021controllability}, since the spectral norm of a matrix is always larger than that of a sub-matrix, and permutation does not change the singular value of a matrix. 
The following property summarizes a key bound on the per-step action error (see~\Cref{eq:per_step_error} in Appendix~\ref{app:proof_lemma_regret_error}).


\begin{definition}[EDPB~\cite{shin2022exponential,lin2022bounded}]
\label{def:perturbation_bound}
An MPC control scheme~\eqref{eq:mpc1}-\eqref{eq:mpc4} corresponding to the control problem instance $(\mathbf{f},\mathbf{c})$ satisfies the \emph{Exponentially Decaying Perturbation Bounds (EDPBs)} if there exists a mapping $\rho:\mathbb{N}\rightarrow\mathbb{R}_+$ with $\sum_{t\in [\nt]}\rho(t)\leq C$ for some constant $C>0$ such that for any $x,x'\in\mathsf{X}$, $t\in [\nt]$, $\tau\in [t,\overline{t}]$, $\tau'\in [t+1,t']$, and $\phi_{t:\overline{t}}\in\Phi^{\overline{t}-t+1}$,
\begin{align}
\label{eq:perturbation_bound_action}
\left\|u_{\tau}\left(x;\phi_{t:\overline{t}}\right)-u_{\tau}\left(x';\phi_{t:\overline{t}}\right)\right\| &\leq \rho(\tau-t)\left\|x-x'\right\|,\\
\label{eq:perturbation_bound_state}
\left\|x_{\tau'}\left(x;\phi_{t:\overline{t}}\right)-x_{\tau'}\left(x';\phi_{t:\overline{t}}\right)\right\| &\leq \rho(\tau'-t)\left\|x-x'\right\|,
\end{align}
and similarly for any $\phi_{t:\overline{t}},\varphi_{t:\overline{t}}\in\Phi^{\overline{t}-t+1}$ and $t\in [\nt]$,
\begin{align}
\label{eq:sensitivity_bound_action}
\left\|u_t\left(\phi_{t:\overline{t}}\right) - u_t\left(\varphi_{t:\overline{t}}\right)\right\| \leq & \sum_{\tau=t}^{\overline{t}} \rho(\tau-t) \|\phi_\tau-\varphi_\tau\|,\\
\label{eq:sensitivity_bound_state}
\left\|x_{t+1}\left(\phi_{t:\overline{t}}\right) - x_{t+1}\left(\varphi_{t:\overline{t}}\right)\right\| \leq & \sum_{\tau=t}^{\overline{t}} \rho(\tau-t) \|\phi_\tau-\varphi_\tau\|.
\end{align}
\end{definition}

With the singular spectrum bounds in Assumption~\ref{ass:singular_bounds}, SSOSC and LICQ (Assumption~\ref{ass:mpc}) imply that the optimal MPC solutions satisfy Definition~\ref{def:perturbation_bound} (see~\cite{shin2022exponential,lin2021perturbation,lin2022bounded}). 
\begin{lemma}[\cite{shin2022exponential,lin2021perturbation,lin2022bounded}]
    \label{lemma:edpb}
    Under Assumption~\ref{ass:basic},~\ref{ass:mpc} and~\ref{ass:singular_bounds}, the MPC scheme~\eqref{eq:mpc} satisfies the EDPBs in Definition~\ref{def:perturbation_bound}.
\end{lemma}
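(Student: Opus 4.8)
The plan is to treat the lemma as a parametric sensitivity result for the MPC program~\eqref{eq:mpc} and to reduce it to the exponential-decay-of-sensitivity machinery of~\cite{shin2022exponential}; our task is essentially to check that Assumptions~\ref{ass:basic},~\ref{ass:mpc}, and~\ref{ass:singular_bounds} furnish exactly the hypotheses required by that machinery, and then to unpack the resulting block-wise decay bound into the four coordinate-wise inequalities of~\Cref{def:perturbation_bound}. Concretely, I would first write the first-order (KKT) system $F(\mathbf{z}_t;\mathbf{r}_t)=0$ associated with~\eqref{eq:mpc1}--\eqref{eq:mpc3}, where $\mathbf{z}_t=(\mathbf{p}_t,\mathbf{d}_t)$ and $\mathbf{r}_t=(\phi_{t:\overline{t}},x_t)$. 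Under~\Cref{ass:mpc}, SSOSC together with LICQ (and local strict complementarity, which LICQ secures at the solution) yields Robinson's strong regularity, so the solution map $\mathbf{r}_t\mapsto\mathbf{z}_t(\mathbf{r}_t)$ is single-valued and Lipschitz on a neighborhood, and its one-sided Jacobian is governed by the inverse of the KKT matrix $K(\mathbf{r}_t)\coloneqq\nabla_{\mathbf{z}_t}F$ restricted to the active manifold, via $\partial\mathbf{z}_t/\partial\mathbf{r}_t=-K(\mathbf{r}_t)^{-1}\nabla^2_{\mathbf{z}_t\mathbf{r}_t}\mathcal{L}$.

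The crux is to upgrade this Lipschitz/boundedness statement to exponential decay in the time-index distance. Here I would exploit that the MPC program is graph-structured: the stage decomposition $(x_\tau,u_\tau)_{\tau=t}^{\overline{t}}$ with dynamics~\eqref{eq:mpc2} coupling only consecutive stages makes $K(\mathbf{r}_t)$ block-banded in the time index $\tau$. The singular-spectrum bounds~\eqref{eq:hessian_1}--\eqref{eq:hessian_2} of~\Cref{ass:singular_bounds} give a uniform upper bound $\overline{\overline{\sigma}}$ on the diagonal and off-diagonal blocks and a uniform lower bound $\underline{\underline{\sigma}}$ on the reduced diagonal blocks, uniformly over $t$, $\phi$, and $x_t\in\mathsf{X}$. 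Invoking the banded-inverse decay estimate of~\cite{shin2022exponential} (a Neumann-series / overlapping-Schur-complement argument), these bounds imply that the $(\tau,s)$-block of $K(\mathbf{r}_t)^{-1}$ is bounded in norm by $C_0\alpha^{|\tau-s|}$ for a uniform constant $C_0>0$ and a uniform contraction factor $\alpha\in(0,1)$ determined by $\overline{\overline{\sigma}}/\underline{\underline{\sigma}}$.

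With this decay in hand, the four inequalities follow by reading off the appropriate block of the sensitivity. For the state-perturbation bounds~\eqref{eq:perturbation_bound_action}--\eqref{eq:perturbation_bound_state}, the parameter $x$ enters $\mathbf{r}_t$ only through the initial-condition block at stage $t$, so the sensitivity of $(u_\tau,x_{\tau'})$ to $x$ is an off-diagonal block of $K^{-1}$ at distance $\tau-t$ (resp. $\tau'-t$), yielding the factor $\rho(\tau-t)\coloneqq C_0\alpha^{\tau-t}$; integrating this infinitesimal bound along the segment from $x'$ to $x$, which stays in $\mathsf{X}$ by convexity and over which the assumptions hold uniformly, gives the stated Lipschitz inequalities. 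For the parameter-perturbation bounds~\eqref{eq:sensitivity_bound_action}--\eqref{eq:sensitivity_bound_state}, $\phi_\tau$ enters the stage-$\tau$ block, so the sensitivity of the first-stage quantities $(u_t,x_{t+1})$ to $\phi_\tau$ is again an off-diagonal block at distance $\tau-t$; summing the contributions of the individual $\phi_\tau$ via the triangle inequality produces the sums over $\tau$ weighted by $\rho(\tau-t)$. Finally, since $\rho$ is geometric, $\sum_{t\in[\nt]}\rho(t)\le C_0/(1-\alpha)\eqqcolon C<\infty$, establishing the required summability.

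The main obstacle I anticipate is not the Lipschitz regularity (immediate from strong regularity) but establishing the \emph{uniform} decay rate $\alpha$ and constant $C_0$ independent of $t$, $\nt$, and the realized parameters: this is precisely where the uniform two-sided singular-spectrum control of~\Cref{ass:singular_bounds} is indispensable, and where one must handle the inequality constraints by arguing that LICQ plus strict complementarity renders the active set locally constant, so that the reduced KKT matrix to which the banded-decay estimate applies is well-defined and varies smoothly. A secondary technical point is ensuring the local sensitivity bounds can be integrated into global bounds over all of $\mathsf{X}\times\Phi^{\overline{t}-t+1}$; convexity of $\mathsf{X}$ and $\Phi$ together with the uniformity of the assumptions over these sets lets one chain the local estimates along straight-line paths without leaving the region of strong regularity.
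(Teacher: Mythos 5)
The paper supplies no proof of~\Cref{lemma:edpb}: it states that the result follows directly from Theorem~4.5 of~\cite{shin2021controllability} (see also Theorem~H.1 of~\cite{lin2022bounded}) and omits the argument. Your reconstruction --- form the KKT system, obtain strong regularity of the solution map from SSOSC and LICQ, exploit the stage-wise block-banded structure of the KKT matrix, invoke the banded-inverse decay estimate under the uniform two-sided spectral bounds of~\Cref{ass:singular_bounds} to get $\rho(j)=C_0\alpha^{j}$, and then read off the four inequalities and sum the geometric series --- is precisely the route taken in those references, so at the level of strategy you match the intended proof.

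There is, however, one genuine flaw in your handling of the inequality constraints: LICQ does \emph{not} secure strict complementarity. Strict complementarity is an independent condition, and the paper deliberately avoids assuming it. Your argument relies on it twice --- to render the active set locally constant so that ``the reduced KKT matrix \ldots is well-defined and varies smoothly,'' and to justify integrating the sensitivity along straight-line paths in $\mathsf{X}\times\Phi^{\overline{t}-t+1}$. Without strict complementarity the solution map under SSOSC and LICQ is still single-valued and Lipschitz (Robinson's strong regularity), but it is only directionally differentiable, and the active set may switch along the path. The decay estimate must therefore hold uniformly over every reduced KKT system that can arise between the strictly active and weakly active index sets; this is exactly what the condition $\underline{\sigma}\left(H_1(\mathbf{r}_t)[\mathcal{B},\mathcal{B}]\right)\geq\underline{\underline{\sigma}}$ for all $\mathcal{B}_{>0}\subseteq\mathcal{B}\subseteq\mathcal{B}_{\geq 0}$ in~\Cref{ass:singular_bounds} is engineered to provide, and it is the reason that assumption is stated over a \emph{range} of index sets rather than at a single one. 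To close the gap you should drop the strict-complementarity claim and instead apply the banded-decay bound to each admissible selection $\mathcal{B}$, chaining the resulting uniform local Lipschitz estimates across the finitely many points where a weakly active constraint changes status, as is done in Theorem~4.5 of~\cite{shin2021controllability}.
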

This lemma follows directly from Theorem 4.5 in~\cite{shin2021controllability} (see also Theorem H.1 of~\cite{lin2022bounded}); therefore, its proof is omitted.
Furthermore, the required assumptions in Lemma~\ref{lemma:edpb} can be relaxed to only hold locally for a base input in $\Phi$, using specifically designed terminal costs in~\eqref{eq:mpc1} (see~\cite{lin2022bounded}). To simplify the presentation, we assume Definition~\ref{def:perturbation_bound} holds uniformly. Note that the EDPBs in Definition~\ref{def:perturbation_bound} naturally hold for various special dynamical systems such as Linear Quadratic Control (LQC), and LTV. In the sequel, we use LQC to exemplify the general notions of MPC and EDPB.




\subsubsection{Linear Quadratic Control (LQC)}
\label{sec:examples}


The general dynamics in~\eqref{eq:dynamics} can be instantiated to the standard linear control system~\cite{kalman1964linear} whose discrete-time version has been studied recently through the lens of competitive analysis\cite{yu2022competitive,goel2022competitive,li2022robustness} by setting 
\begin{align}
\label{eq:dynamics_lqc}
f\left(x_t,u_t;\phi_t^{\star}\right)\coloneqq Ax_t+Bu_t+\phi_t^{\star}
\end{align}
with known system matrices $A\in  \mathbb{R}^{n\times n}$, $B\in \mathbb{R}^{n\times m}$, and unknown perturbations $\left(\phi_t^{\star}:t\in [\nt]\right)$. 
The control objective is the following quadratic costs given matrices $A,B,Q,R$:
\begin{align}
\label{eq:cost_lqc}
    J(\pi) =  \sum_{t=0}^{\nt-1} \left(x_t^\top Q x_t + u_t^\top R u_t\right) + x_{\nt}^\top P x_{\nt},
\end{align}
where $Q\in\mathbb{R}^{n\times n}, R\in\mathbb{R}^{m\times m}$ are positive-definite matrices, and $P$ is a symmetric positive-definite cost-to-go solution of the following Discrete Algebraic Riccati Equation (DARE), which must exist because $(A,B)$ is stabilizable and $Q,R$ are positive-definite  \cite{dullerud2013course}:
\begin{align*}
    P=Q+A^\top P A - A^\top PB (R+B^\top P B)^{-1} B^\top PA.
\end{align*}
Given $P$,  we define $K \coloneqq (R+B^\top P B)^{-1} B^\top P A$, and $u_t=-Kx_t$ is the feedback control law of a Linear Quadratic Regulator (LQR), which is optimal in the case of zero disturbances ($\phi_t^{\star} =0$ for all $t\in [\nt]$). 
Furthermore, let $F\coloneqq A-BK$ be the closed-loop system matrix. The Gelfand’s formula implies that there must exist a constant $C_F>0$ and a spectral radius $\rho_F\in(0,1)$ such that $\Vert F^t\Vert\leq C_F\rho_F^t$ for all $t\in [\nt]$. 
For LQC, setting a terminal cost $c(x_{t'})=x_{t'}^{\top}Px_{t'}$ in~\eqref{eq:mpc1}, $u_t(x_t;\phi_{t:\overline{t}|t})$ can be explicitly written as (see~\cite{yu2020power,li2022robustness})
\begin{align}
\nonumber
u_t(x_t;\phi_{t:\overline{t}|t})=-Kx_t -(R+B^{\top}PB)^{-1}B^{\top}\sum_{\tau=t}^{\overline{t}}(F^{\top})^{\tau-t}P\phi_{\tau|t}.
\end{align}
For the LQC case, this leads to the following explicit EDPB, which is a specific instance of the sensitivity bound in \eqref{eq:sensitivity_bound_action}:
\begin{align*}
\left\|u_t\left(\phi_{t:\overline{t}}\right) - u_t\left(\varphi_{t:\overline{t}}\right)\right\|\leq & C_F\left\|(R+B^{\top}PB)^{-1}B^{\top}\right\|\|P\|\\ &\cdot\sum_{\tau=t}^{\overline{t}}\rho_F^{\tau-t}
    \left\|\phi_{\tau}-\varphi_{\tau}\right\|.
\end{align*}
Bounds for the state trajectories and with respect to initial state perturbations can be derived in a similar manner.
\subsection{Problem Formulation}
\label{sec:problem_formulation_error}
To evaluate the performance, for each time $t\in [\nt]$, given $\phi_{t:\overline{t}}$ and $\phi_{t:\overline{t}}^{\star}$, we define the following step-wise \textit{prediction errors} corresponding to the \textit{predictive} parameters and \textit{nominal} parameters respectively:
\begin{align}
    \label{eq:prediction_error}
    \epsilon_{\tau|t} \coloneqq \phi_{\tau}^{\star}-\phi_{\tau|t}, \ \    \overline{\epsilon}_{\tau|t} \coloneqq \phi_{\tau}^{\star}-\kappa_{\tau|t}
\end{align}
for all $\tau\in [t:\overline{t}]$. Stacking the step-wise prediction errors leads to the following \textit{overall errors} for the predictive and nominal parameters $\phi_{\textsc{P}}\coloneqq (\phi_{t:\overline{t}|t}:t\in [\nt])$ and $\kappa_{\textsc{N}}\coloneqq (\kappa_{t:\overline{t}|t}:t\in [\nt])$ with respect to the true parameters $\phi_{\star}\coloneqq (\phi^{\star}_{t}:t\in [\nt])$:
\begin{align}
\nonumber
\boldsymbol{\varepsilon}\left(\phi_{\textsc{P}},\phi_{\star}\right) &\coloneqq \left({\varepsilon}_{\tau |t}\right)_{\tau\in [t,\overline{t}],t\in [\nt]},  \ \ 
\overline{\boldsymbol{\varepsilon}}\left(\phi_{\textsc{N}},\phi_{\star}\right) \coloneqq \left(\overline{\varepsilon}_{\tau |t}\right)_{\tau\in [t,\overline{t}],t\in [\nt]}.
\end{align}
We will simplify them as $\boldsymbol{\varepsilon}$ and $\overline{\boldsymbol{\varepsilon}}$ if they are regarded as static quantities rather than mappings.
In this work, we focus on designing a competitive policy (c.f.~\cite{goel2022competitive}) $\pi=(\pi_t:t\in [\nt])$ that minimizes the competitive ratio of costs as a function of the overall errors $\boldsymbol{\varepsilon}$ and $\overline{\boldsymbol{\varepsilon}}$. 
We write $\Phi^\nk$ as a $\nk$-ary Cartesian product of the uncertainty sets $\Phi$'s. At time $t\in [\nt]$, each control policy $\pi_t:\mathbb{R}^{\nn}\times \Phi^{\nk}\rightarrow \mathbb{R}^{\nm}$ is a state-feedback policy that produces an action $u_t\in \mathbb{R}^{\nm}$ with an observed state $x_{t-1}$, and predictions $\phi_{t:\overline{t}|t}$ of the variables $\phi_{t:\overline{t}}^{\star}\in \Phi^{\nk}$ for future $\nk$ time steps with a prediction window size $\nk >0$. Formally, we define
\begin{align}
    \label{eq:cr}
\mathsf{CR}\left(\pi;\boldsymbol{\varepsilon},\overline{\boldsymbol{\varepsilon}}\right)\coloneqq \sup_{\phi_{\textsc{P}},\phi_{\textsc{N}},\phi_{\star}\in\Pi}\frac{J(\pi)}{J^{\star}}
\end{align}
where the supremum is taken over $\Pi\coloneqq \Phi^{\nk\times\nt}\times \Phi^{\nk\times\nt} \times \Phi^{\nt}$; $J(\pi)$ is the total cost induced by $\pi$ as defined in~\eqref{eq:costs} subject to the constraints in~\eqref{eq:dynamics} and~\eqref{eq:constraints}, and $J^{\star}$ denotes the offline optimal cost. When $J^{\star}=0$, we let $\mathsf{CR}\left(\pi;\boldsymbol{\varepsilon},\overline{\boldsymbol{\varepsilon}}\right)=\infty$. 

Our goal is to design $\pi$ and solve the Learning-Augmented Control (LAC) problem such that (1). $\pi$
exploits good predictions \textit{consistently} to achieve near-optimal competitive ratio, i.e., $\limsup_{\nt\rightarrow\infty}\mathsf{CR}\left(\pi;\boldsymbol{\varepsilon},\overline{\boldsymbol{\varepsilon}}\right)=1$ when $\boldsymbol{\varepsilon}=0$; (2). $\pi$
has a bounded competitive ratio \textit{robustly} in the worst-case, i.e., $\limsup_{\nt\rightarrow\infty}\sup_{\boldsymbol{\varepsilon}}\mathsf{CR}\left(\pi;\boldsymbol{\varepsilon},\overline{\boldsymbol{\varepsilon}}\right)=O(1)$.

\section{Learning-Augmented Control}

In this section, based on the MPC scheme defined in Section~\ref{sec:model_mpc}, we explore solutions to achieve the control target of the LAC problem.
A natural solution to the general problem~\eqref{eq:dynamics}-\eqref{eq:costs} is to use an MPC scheme with predictive and nominal parameters $\phi_{t:\overline{t}}$ and $\kappa_{t:\overline{t}}$. 
Since the predictive parameters $\phi_{t:\overline{t}}$ may contain error, at each time $t$ we combine predictive parameters $\phi_{t:\overline{t}}$ and  nominal parameters $\kappa_{t:\overline{t}}$~in~\eqref{eq:mpc1}-\eqref{eq:mpc2} using a  \textit{confidence parameter} $\lambda\in \mathcal{I}$ such that
\begin{align}
\label{eq:lcc}
u_t\left(x_t;\lambda \phi_{t:\overline{t}|t}+(1-\lambda) \kappa_{t:\overline{t}|t}\right) \ \text{($\lambda$-confident control)}
\end{align}
is taken.
Fixing $\lambda\in\mathcal{I}$, we denote this \textit{$\lambda$-confident control} in~\eqref{eq:lcc} as $\pi_{\lambda}:\mathbb{R}^{\nn}\rightarrow\mathbb{R}^{\nm}$. We  validate its recursive feasibility.

\begin{theorem}
\label{thm:recursive_feasibility}
Suppose Assumption~\ref{ass:control_invariant} holds.
There always exist MPC solutions when
implementing the $\lambda$-confident control policy $\pi_{\lambda}$. Furthermore, the recursive feasibility holds if a sequence of time-varying $(\lambda_t:t\in [\nt])$ is used.
\end{theorem}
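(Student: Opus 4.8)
The plan is to split the statement into its two parts---existence of MPC solutions for a fixed $\lambda$, and recursive feasibility for a time-varying sequence $(\lambda_t)$---and to unify them through a single observation: the $\lambda$-confident control~\eqref{eq:lcc} is nothing but the MPC scheme~\eqref{eq:mpc} solved with the \emph{blended} parameter $\psi_{\tau|t}\coloneqq \lambda\phi_{\tau|t}+(1-\lambda)\kappa_{\tau|t}$. Since the predictions $\phi_{\tau|t}$ and the nominal values $\kappa_{\tau|t}$ lie in the convex uncertainty set $\Phi$ (recall $0\in\Phi$) and $\lambda\in\mathcal{I}=[0,1]$, convexity yields $\psi_{\tau|t}\in\Phi$ for every $\tau\in[t:\overline{t}]$. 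This places the blended parameters squarely within the hypotheses of Assumption~\ref{ass:control_invariant}, which is the engine of the whole argument.

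For the first statement I would fix $x_t\in\mathsf{X}$ and construct a feasible trajectory by induction along the prediction horizon. Applying the control-invariance clause of Assumption~\ref{ass:control_invariant} with the in-set parameter $\psi_{t|t}$ produces some $u_t\in\mathsf{U}$ with $h_t(x_t,u_t;\psi_{t|t})\leq 0$ and $x_{t+1}\coloneqq f_t(x_t,u_t;\psi_{t|t})\in\mathsf{X}$; iterating for $\tau=t,\ldots,\overline{t}$ gives a feasible $(u_{t:\overline{t}},x_{t+1:t'})$ whose terminal state $x_{t'}\in\mathsf{X}$ meets the terminal constraint in~\eqref{eq:mpc4}. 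Hence the feasible set of~\eqref{eq:mpc} is nonempty, and combined with continuity and nonnegativity of the cost (Assumption~\ref{ass:basic}) together with closedness/coercivity on $\mathsf{X}\times\mathsf{U}$, a minimizer exists, so the action~\eqref{eq:lcc} is well defined.

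For recursive feasibility I would exploit the gap between the prediction-driven MPC and the true dynamics. Let $u_t$ be the first element of the MPC solution at time $t$, so $h_t(x_t,u_t;\psi_{t|t})\leq 0$ and $f_t(x_t,u_t;\psi_{t|t})\in\mathsf{X}$; the actual closed-loop state, however, evolves as $x_{t+1}=f_t(x_t,u_t;\phi_t^{\star})$ with $\phi_t^{\star}\in\Phi$. The robust clause of Assumption~\ref{ass:control_invariant} is exactly the lever: feasibility of $(x_t,u_t)$ under $\varphi=\psi_{t|t}$ forces $h_t(x_t,u_t;\phi)\leq 0$ and $f_t(x_t,u_t;\phi)\in\mathsf{X}$ for all $\phi\in\Phi$, in particular for $\phi=\phi_t^{\star}$. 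Thus $x_{t+1}\in\mathsf{X}$, and reapplying the existence argument at time $t+1$ keeps the MPC problem feasible. Because the blending uses only $\lambda_t\in[0,1]$ and preserves $\psi_{\tau|t}\in\Phi$ no matter how $\lambda_t$ is chosen, this induction propagates verbatim when the time-varying sequence $(\lambda_t:t\in[\nt])$ is used.

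The main obstacle I anticipate is precisely this decoupling: the controller optimizes against the possibly inaccurate (even adversarial) blended parameter $\psi_{t|t}$, yet the realized state must remain in $\mathsf{X}$ under the true $\phi_t^{\star}$. Everything hinges on the robust invariance encoded in Assumption~\ref{ass:control_invariant}, which is designed to transfer feasibility from any blended parameter to the entire uncertainty set; by comparison the remaining ingredients---convex closure of $\Phi$ under blending, the inductive trajectory construction, and the standard existence of a minimizer---are routine.
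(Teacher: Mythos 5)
Your proposal is correct and follows essentially the same route as the paper's proof: convexity of $\Phi$ places the blended parameter in the uncertainty set, control invariance of $\mathsf{X}$ gives within-horizon feasibility, and the robustness clause of Assumption~\ref{ass:control_invariant} transfers feasibility from the blended parameter to the true $\phi_t^{\star}$ so the induction closes. You are slightly more explicit than the paper on the existence of a minimizer and on why the argument is insensitive to how $\lambda_t$ varies, but these are elaborations of the same argument, not a different one.
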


\begin{proof}
The uncertainty set $\Phi$ is convex, thus for any $\tau\in [t:\overline{t}]$, $t\in [\nt]$, and $\lambda_t\in\mathcal{I}$, $\phi_{\tau|t}^{\lambda_t}\coloneqq\lambda_t \phi_{\tau|t}+(1-\lambda_t) \kappa_{\tau|t}\in\Phi$. The initial state $x_0\in\mathsf{X}$ implies that the MPC problem~\eqref{eq:mpc} is feasible at time $t=0$ since $\mathsf{X}$ is control invariant. We prove the theorem using induction by showing that if the system is feasible at time $t$ with a starting state $x_t\in\mathsf{X}$, then it is also feasible at time $t+1\in [\nt]$ with an actual state $x_{t+1}\in\mathsf{X}$ at $t+1$. Since by the inductive hypothesis, $x_t\in\mathsf{X}$, there exists a feasible action $u_{t}\in\mathsf{U}$ such that $f_t(x_t,u_t;\phi_{t|t}^{\lambda_t})\in\mathsf{X}$ and $h_t(x_t,u_t;\phi_{t|t}^{\lambda_t})\leq 0$. This action $u_t$ is the actual action taken at time $t$. Therefore, the robustness of the uncertainty set $\Phi$ (Assumption~\ref{ass:control_invariant}) implies that the actual state at time $t+1$  satisfies $f_t(x_t,u_t;\phi_{t}^{\star})\in\mathsf{X}$ and $h_t(x_t,u_t;\phi_{t}^{\star})\leq 0$ with the same $(x_t,u_t)$. Similarly, the constraint~\eqref{eq:mpc4} guarantees the feasibility of the MPC scheme at time $t+1$.
\end{proof}

\subsection{Sensitivity Analysis}




Recall the mapping $\rho:\mathbb{N}\rightarrow\mathbb{R}_+$ defined in~\Cref{sec:problem_formulation}. 
To deliver our sensitivity analysis result and enable the online learning of $\lambda$, we consider a \textit{total error} that quantifies how $\lambda$ affects the system performance, represented as a $\rho$-weighted $\ell_2$-norm 
${\xi}(\lambda)\coloneqq \|\lambda\boldsymbol{\varepsilon}+(1-\lambda)\overline{\boldsymbol{\varepsilon}}\|_{\rho}^2$ of a convex combination of $\boldsymbol{\varepsilon}$ and $\overline{\boldsymbol{\varepsilon}}$, where $\lambda$ is the confidence parameter in the $\lambda$-confident control. The $\rho$-weighted $\ell_2$-norm 
is defined as 
${\xi}(\lambda) = \sum_{t\in [\nt]}\xi_{t,\nt}(\lambda)$, with each $\xi_{t,\nt}(\lambda)$ defined as
\begin{align}
\label{eq:xi_t}
\xi_{t,\nt}(\lambda)\coloneqq \left(\sum_{\tau=t}^{\overline{t}}\rho(\tau-t)\left\|\lambda \varepsilon_{\tau|t} +(1-\lambda)\overline{\varepsilon}_{\tau|t}\right\|\right)^2.
\end{align}
For notional simplicity, we further let $\xi_{\tau|t}(\lambda)\coloneqq \rho(\tau-t)\left\|\lambda \varepsilon_{\tau|t} +(1-\lambda)\overline{\varepsilon}_{\tau|t}\right\|$.
We analyze the cost difference corresponding to the $\lambda$-confident control and the optimal cost, in terms of the step-wise prediction errors~\eqref{eq:prediction_error}, as shown in the following lemma, whose proof is provided in~\Cref{app:proof_lemma_regret_error}.
\begin{lemma}
\label{lemma:regret_error}
Suppose the bounds in~\Cref{def:perturbation_bound} (EDPBs) hold and the $\lambda$-confident control policy is feasible.
The $\lambda$-confident control in~\eqref{eq:lcc} satisfies for any $\eta>0$,
    \begin{align*}
      J(\pi_{\lambda}) - (1+\eta)J^{\star} 
      \leq \left(3C^2+2\right)\left(\ell+\frac{\ell}{\eta}\right)\left(  \gamma^2\rho^2(\nk)\nt + {\xi}(\lambda)
\right)
\end{align*}
where $C$ is a constant defined in~\Cref{sec:second_first_assumption}. 
\end{lemma}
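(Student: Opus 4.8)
The plan is to first convert the cost gap into a sum of squared trajectory deviations, and then bound those deviations using the EDPBs of \Cref{def:perturbation_bound}. Write $z_t^\lambda\coloneqq(x_t^\lambda,u_t^\lambda)$ for the closed-loop state--action pair produced by $\pi_\lambda$ and $z_t^\star$ for the offline-optimal pair. Since each stage cost $c_t\geq 0$ is $\ell$-smooth, the descent lemma gives $c_t(z_t^\lambda)\le c_t(z_t^\star)+\langle\nabla c_t(z_t^\star),z_t^\lambda-z_t^\star\rangle+\tfrac{\ell}{2}\|z_t^\lambda-z_t^\star\|^2$, and the self-bounding inequality $\|\nabla c_t(z_t^\star)\|^2\le 2\ell\,c_t(z_t^\star)$ (valid for nonnegative $\ell$-smooth functions) lets me split the cross term by Young's inequality with weight $\eta$ into $\eta\,c_t(z_t^\star)+\tfrac{\ell}{2\eta}\|z_t^\lambda-z_t^\star\|^2$. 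Summing over $t\in[\nt]\cup\{\nt\}$ collapses the first pieces into $\eta J^\star$ and yields $J(\pi_\lambda)-(1+\eta)J^\star\le\tfrac12\bigl(\ell+\tfrac{\ell}{\eta}\bigr)\sum_t\|z_t^\lambda-z_t^\star\|^2$. It therefore suffices to show $\sum_t\|z_t^\lambda-z_t^\star\|^2\le 2(3C^2+2)\bigl(\gamma^2\rho^2(\nk)\nt+\xi(\lambda)\bigr)$.

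The second step bounds the per-step deviation through the EDPBs. Fixing $t$, I decompose the action gap as $u_t^\lambda-u_t^\star=[u_t(x_t^\lambda;\phi^\lambda_{t:\overline{t}|t})-u_t(x_t^\lambda;\phi^\star_{t:\overline{t}})]+[u_t(x_t^\lambda;\phi^\star_{t:\overline{t}})-u_t(x_t^\star;\phi^\star_{t:\overline{t}})]+[u_t(x_t^\star;\phi^\star_{t:\overline{t}})-u_t^\star]$. The first bracket is a pure parameter perturbation from a common initial state: because $\phi^\star_\tau-\phi^{\lambda}_{\tau|t}=\lambda\varepsilon_{\tau|t}+(1-\lambda)\overline{\varepsilon}_{\tau|t}$, the sensitivity bound \eqref{eq:sensitivity_bound_action} controls it by $\sum_{\tau=t}^{\overline{t}}\rho(\tau-t)\|\lambda\varepsilon_{\tau|t}+(1-\lambda)\overline{\varepsilon}_{\tau|t}\|=\sqrt{\xi_{t,\nt}(\lambda)}$, matching \eqref{eq:xi_t}. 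The second bracket is a state perturbation, bounded by $\rho(0)\|x_t^\lambda-x_t^\star\|$ via \eqref{eq:perturbation_bound_action}. The third bracket compares the finite-window MPC action under true parameters to the offline-optimal action from the same state; this is the finite-lookahead truncation error, whose dependence on the parameters entering beyond the window at lag $\ge\nk$ is damped by the EDPB decay and hence bounded by $C\gamma\rho(\nk)$ using the diameter $\gamma$ of $\Phi$. The companion decomposition with \eqref{eq:sensitivity_bound_state} and \eqref{eq:perturbation_bound_state} controls $\|x_{t+1}^\lambda-x_{t+1}^\star\|$ in the same three terms, the additional realized-versus-predicted first-step mismatch being absorbed into the $\xi$ contribution.

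The third step resolves the state recursion and aggregates. Letting $\delta_t\coloneqq\|x_t^\lambda-x_t^\star\|$, the state decomposition yields a recursion of the form $\delta_{t+1}\le(\text{contraction})\,\delta_t+\sqrt{\xi_{t,\nt}(\lambda)}+C\gamma\rho(\nk)$; unrolling it and invoking the summability $\sum_{s}\rho(s)\le C$ shows that the accumulated state error is a convergent convolution of the fresh per-step errors, so it is bounded by a $C$-multiple of them rather than compounding over the horizon. Substituting back, each $\|z_t^\lambda-z_t^\star\|$ is a sum of three nonnegative contributions (confidence-blended parameter error, truncation, propagated state error); applying $(\|a\|+\|b\|+\|c\|)^2\le 3(\|a\|^2+\|b\|^2+\|c\|^2)$ and Cauchy--Schwarz on the $\rho$-weighted sums collapses $\sum_t\|z_t^\lambda-z_t^\star\|^2$ into $2(3C^2+2)(\gamma^2\rho^2(\nk)\nt+\xi(\lambda))$, where the factor $3$ counts the three sources, a factor $C^2$ comes from the convolution bound, and the truncation term contributes $\gamma^2\rho^2(\nk)$ at each of the $\nt$ steps. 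Combining with Step~1 gives the lemma.

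I expect the \emph{main obstacle} to be Step~3: showing that the closed-loop state deviation stays summable rather than growing exponentially. A naive recursion lets an error in $x_t$ feed through $u_t$ into $x_{t+1}$ and could blow up over $\nt$ steps; controlling this is precisely the role of the exponentially decaying perturbation bounds, and the careful bookkeeping that turns the doubly indexed $\rho$-weighted errors into the single clean quantity $\xi(\lambda)$ (whose summand \eqref{eq:xi_t} is itself a squared $\rho$-weighted sum) via Cauchy--Schwarz is where the constant $3C^2+2$ is pinned down.
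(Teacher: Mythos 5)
Your Steps 1 and 2 track the paper's proof closely: the smoothness/self-bounding/Young argument reducing the cost gap to $\tfrac{\ell}{2}\bigl(1+\tfrac{1}{\eta}\bigr)\sum_t\|z_t^\lambda-z_t^\star\|^2$ is exactly the role of the cited Lemma F.2, and your three-way split into a confidence-blended parameter error, a state perturbation, and a window-truncation term is the same accounting the paper performs when bounding the per-step errors $e^u_t$ and $e^x_t$ via \eqref{eq:sensitivity_bound_action}--\eqref{eq:sensitivity_bound_state}.

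The gap is in Step 3, precisely where you flag the main obstacle, and your proposed resolution does not close it. You posit the recursion $\delta_{t+1}\le(\text{contraction})\,\delta_t+\sqrt{\xi_{t,\nt}(\lambda)}+C\gamma\rho(\nk)$ and claim that unrolling it together with $\sum_s\rho(s)\le C$ yields a convergent convolution. Neither half is justified by the EDPBs. First, \eqref{eq:perturbation_bound_state} only supplies the one-step factor $\rho(1)$, and Definition~\ref{def:perturbation_bound} does not force $\rho(1)<1$: the appendix proof even uses $\rho(0)\ge 1$, and $\sum_s\rho(s)\le C$ is compatible with $\rho(1)>1$ whenever $C>2$ (e.g., $\rho(s)=C_0\rho_0^s$ with $C_0\rho_0>1$), so there is no contraction. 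Second, even granting a one-step bound, unrolling produces geometric weights $\rho(1)^{t-\tau}$, not the kernel $\rho(t-\tau)$, and summability of $\rho$ says nothing about $\sum_j\rho(1)^j$. The mechanism the paper uses, and which your write-up is missing, is a performance-difference/telescoping decomposition $x_t^{\pi}-x_t^{\star}=\sum_{\tau}\bigl(x_t(x_\tau^{\pi};\phi^{\star}_{\tau:\nt})-x_t(x_{\tau-1}^{\pi};\phi^{\star}_{\tau-1:\nt})\bigr)$, in which each summand compares two \emph{full-horizon} optimal trajectories restarted at time $\tau$ from initial states that differ by exactly the per-step error $e^x_\tau(x^\pi_{\tau-1})$; the long-range damping $\rho(t-\tau)$ of \eqref{eq:perturbation_bound_state} then applies directly, giving $\|x_t^{\pi}-x_t^{\star}\|\le\sum_{\tau\le t}\rho(t-\tau)\,e^x_\tau(x^\pi_{\tau-1})$ without ever composing one-step bounds. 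Two applications of Cauchy--Schwarz with $\sum_s\rho(s)\le C$ (one inside the square, one after swapping the order of summation over $t$ and $\tau$) then produce the $C^2$ and pin down the constant $3C^2+2$. Without replacing your recursion by this telescoping argument, the aggregation step does not go through.
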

It is well known that for a fixed $\lambda$, the $\lambda$-confident control defined in~\eqref{eq:lcc} may not yield a stabilizing controller (see Theorem 1 in~\cite{li2023certifying} and Section II of~\cite{zheng2020equivalence}). Consequently, we seek to learn a sequence of time-varying parameters $(\lambda_t:t\in [\nt])$ online, a process we term \emph{confidence learning}. In what follows, we present a formal scheme for the online estimation of these confidence parameters.


\subsection{Delayed Confidence Learning}

To minimize the regret $J(\pi_{\lambda}) - J^{\star}$, we aim to minimize its upper bound in~\Cref{lemma:regret_error} by generating the confidence parameter $\lambda_t$ online at each time $t$. 
We proceed to reformulate the minimization of the upper bound in~\Cref{lemma:regret_error} into an online optimization problem. Consider the mapping $\xi_{t,\nt}(\lambda):\mathcal{I}\rightarrow\mathbb{R}$, which is defined as a sum over $\tau$ for $t \leq \tau \leq \overline{t}$. The structure of the offline objective $\sum_{t=0}^{\nt-1}\xi_{t,\nt}(\lambda)$ (see~\eqref{eq:xi_t}) depends on the full problem horizon $\nt$, making it incompatible with standard online optimization frameworks.  
This necessitates a tailored approach for learning $\lambda_t$ sequentially, which is detailed next.

\begin{figure}[ht]
    \centering    \includegraphics[width=0.8\linewidth]{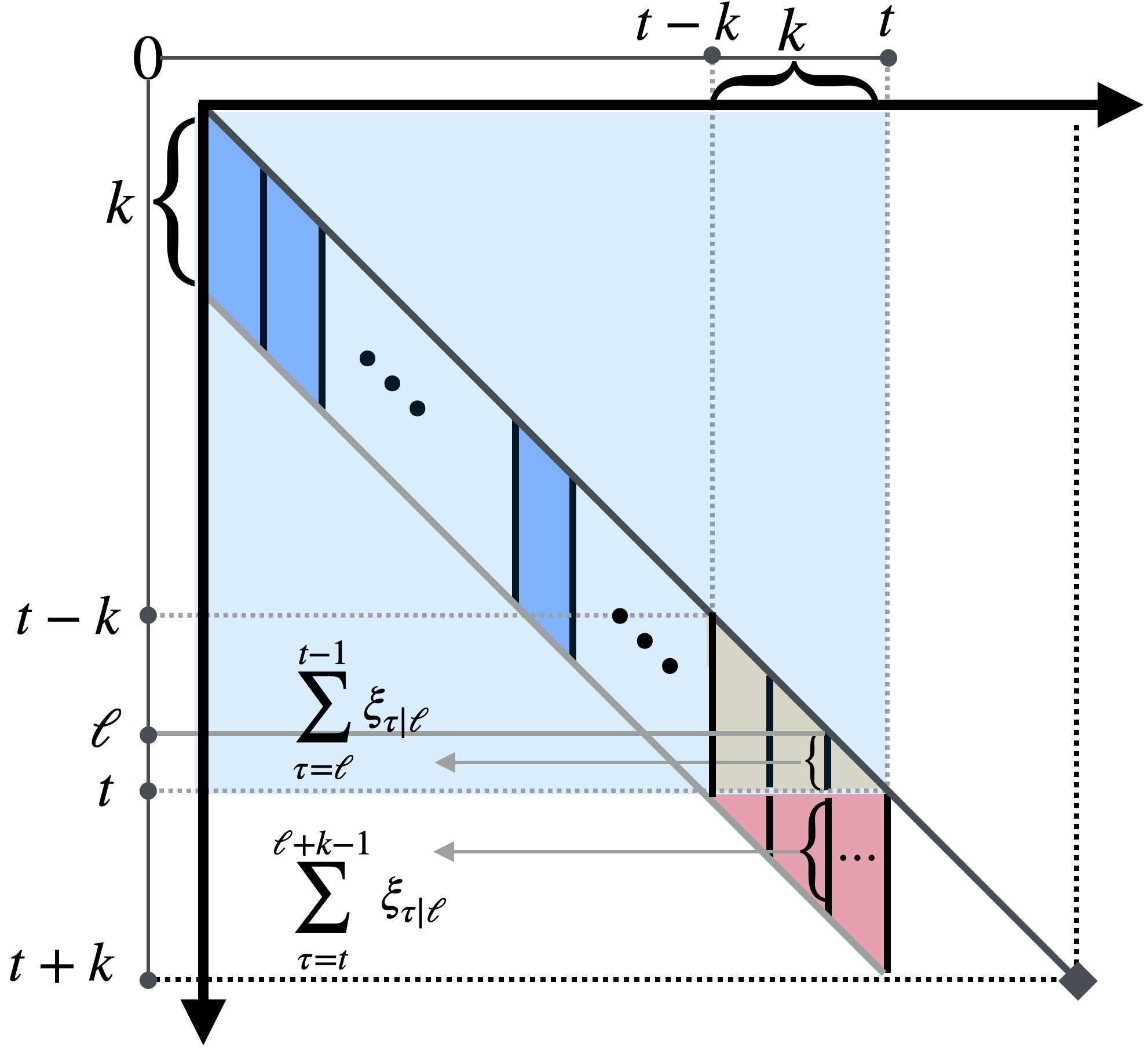}
    \caption{Decomposition of the term $\smash{\xi_{\ell,t}^{1/2}}$ as a sum of $k$ elements $\xi_{\tau|\ell}$ into known information ($\sum_{\tau=\ell}^{t-1}\xi_{\tau|\ell}$) and future information ($\sum_{\tau=t}^{\ell+\nk-1}\xi_{\tau|\ell}$) with respect to the decision time $t$. The light blue area indicates information available at time $t$. This dependency on future components necessitates the \dcl online optimization procedure (Procedure~\ref{alg:dcl}).}
    \label{fig:enter-label}
\end{figure}

In particular, the terms
in this upper bound depend on information that unfolds over time. At time $t$, consider the terms $(\xi_{\ell,t}(\lambda):\ell< t)$.
For $t> \ell> t-\nk$, each $\xi_{\ell,t}(\lambda)$  can be decomposed based on information available at the current decision time $t\in [\nt]$ into known and future components. Figure~\ref{fig:enter-label} visually illustrates this decomposition for a generic term $\xi_{\ell,t}(\lambda)$ and shows how $\xi_{\ell,t}(\lambda)$ is constructed from elements $\xi_{\tau|\ell}(\lambda)$ (summed over a window of size $\nk$, which corresponds to $\nk$ in the formula below) and how these elements are partitioned relative to the current decision time $t\in [\nt]$.
\begin{align*}
    \xi_{\ell,t}(\lambda) = \Big(\underbrace{\sum_{\tau=\ell}^{t-1}\xi_{\tau | \ell}(\lambda)}_{\text{known information}} + \underbrace{\sum_{\tau=t}^{\min\{\ell+\nk-1,\nt-1\}}\xi_{\tau | \ell}(\lambda)}_{\text{future information}}\Big)^2.
\end{align*}
At time $t$, before determining the control action $u_t$, the controller receives delayed feedback consisting of $(\xi_{\ell,t}:0\leq\ell\leq t-\nk)$.
Since each term $\xi_{\ell,t}$
for $\ell> t-\nk$ incorporates ``future information'' not yet observed due to this delay, optimizing the upper bound using only the available delayed feedback naturally leads to a delayed update rule for $\lambda_t$, using the gradient of $\xi_{t-\nk}(\lambda)$. Here, we write it as $\xi_{t-\nk}(\lambda)$ instead of $\xi_{t-\nk,\nt}(\lambda)$ since $t\leq \nt-1$. This procedure, denoted \dcl, is outlined in Procedure~\ref{alg:dcl} and is consistent with established delayed online optimization techniques~\cite{weinberger2002delayed,joulani2013online}. It is worth noting that alternative delayed online optimization schemes exist that achieve improved multiplicative constants~\cite{huang2023banker,van2023unified}.

\begin{myprocedure}[t]

\textbf{Input}: Parameters $\phi^{\star}_{t-\nk:t-1}$, $\phi_{t-\nk:t-1|t-\nk}$, and $\kappa_{t-\nk:t-1|t-\nk}$

\SetAlgoLined

\DontPrintSemicolon

Compute ${\varepsilon}_{\tau |t-\nk}$ and $\overline{\varepsilon}_{\tau |t-\nk}$ as~\eqref{eq:prediction_error} for $t-\nk\leq \tau\leq t-1$

Update $\lambda_t = \mathsf{Proj}_{\mathcal{I}}\left(\lambda_{t-\nk} - {\beta}\nabla_{\lambda}\xi_{t-\nk}(\lambda)\right)$

\Return $\lambda_t$

\caption{\textsc{D}elayed  \textsc{C}onfidence \textsc{L}earning (\dcl)}
\label{alg:dcl}
\end{myprocedure}


The following lemma establishes a regret bound for the delayed online optimization of the confidence parameter $\lambda_t$.

\begin{lemma}
\label{lemma:dcl}
Suppose the bounds specified in~\Cref{def:perturbation_bound} (EDPBs) hold.
Let $\lambda^{\star}$ be a parameter that minimizes the cumulative term $\sum_{t=0}^{\nt-1}\xi_{t,\nt}(\lambda)$ over the feasible set $\in\mathcal{I}$. Then, the sequence of parameters $\{\lambda_t\}_{t=0}^{\nt-1}$ generated by the \dcl procedure (Procedure~\ref{alg:dcl}) satisfies:
\begin{align}
\label{eq:dcl_regret}
\sum_{t=0}^{\nt-1}\xi_{t,\nt}\left(\lambda_t\right) -  \sum_{t=0}^{\nt-1}\xi_{t,\nt}\left(\lambda^{\star}\right)\leq 4C^2\gamma^2\sqrt{\nt\nk+\nk^2}.
\end{align} 
\end{lemma}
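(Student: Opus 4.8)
The plan is to recognize the \dcl update as $\nk$ interleaved copies of standard online projected (sub)gradient descent, bound the regret of each copy separately, and sum. First I would record two elementary facts about the per-step loss $\xi_{t,\nt}(\lambda)$ in~\eqref{eq:xi_t}. \emph{Convexity:} for each $\tau$ the map $\lambda\mapsto\lambda\varepsilon_{\tau|t}+(1-\lambda)\overline{\varepsilon}_{\tau|t}$ is affine, so its norm is convex; the weights $\rho(\tau-t)$ are nonnegative, hence the inner sum $S_t(\lambda)\coloneqq\sum_{\tau=t}^{\overline{t}}\rho(\tau-t)\|\lambda\varepsilon_{\tau|t}+(1-\lambda)\overline{\varepsilon}_{\tau|t}\|$ is a nonnegative convex function of $\lambda$, and $\xi_{t,\nt}=S_t^2$ is therefore convex on $\mathcal{I}$. \emph{Bounded (sub)gradient:} since all parameters lie in $\Phi$ with diameter $\gamma$, each norm term is at most $\gamma$, giving $S_t(\lambda)\le\gamma\sum_{\tau}\rho(\tau-t)\le\gamma C$ via $\sum_{t}\rho(t)\le C$ from~\Cref{def:perturbation_bound}; likewise each $\|\varepsilon_{\tau|t}-\overline{\varepsilon}_{\tau|t}\|=\|\kappa_{\tau|t}-\phi_{\tau|t}\|\le\gamma$, so by Cauchy--Schwarz $|S_t'(\lambda)|\le\gamma\sum_{\tau}\rho(\tau-t)\le\gamma C$. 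The chain rule then yields $|\nabla_\lambda\xi_{t,\nt}(\lambda)|=2|S_t(\lambda)||S_t'(\lambda)|\le G$ with $G=2C^2\gamma^2$, uniformly over $t\in[\nt]$ and $\lambda\in\mathcal{I}$.

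The key structural step is the decomposition. Because the update in Procedure~\ref{alg:dcl} reads $\lambda_t=\mathsf{Proj}_{\mathcal{I}}(\lambda_{t-\nk}-\beta\nabla_\lambda\xi_{t-\nk}(\lambda_{t-\nk}))$, the iterate $\lambda_t$ depends only on $\lambda_{t-\nk}$ and the gradient of $\xi_{t-\nk}$. Partitioning $[\nt]$ by residue $j\in\{0,\dots,\nk-1\}$ modulo $\nk$, the subsequence $(\lambda_{j},\lambda_{j+\nk},\lambda_{j+2\nk},\dots)$ evolves \emph{exactly} as projected online gradient descent on $\mathcal{I}$: at its $i$-th round the learner plays $\lambda_{j+i\nk}$, incurs the convex loss $\xi_{j+i\nk,\nt}$, and updates using that loss's gradient. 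Consequently the total regret splits cleanly as $\sum_{t=0}^{\nt-1}\big(\xi_{t,\nt}(\lambda_t)-\xi_{t,\nt}(\lambda^\star)\big)=\sum_{j=0}^{\nk-1}\mathrm{Reg}_j$, where $\mathrm{Reg}_j$ is the within-subproblem OGD regret against the common comparator $\lambda^\star\in\mathcal{I}$, which is a legitimate comparator for every subproblem since it is a fixed point of $\mathcal{I}$.

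To each subproblem I would apply the textbook OGD guarantee on the compact convex domain $\mathcal{I}=[0,1]$ of diameter $D=1$: with $N_j$ the number of rounds in subproblem $j$ and the single step size $\beta$ tuned to the maximal length $N_{\max}=\lceil\nt/\nk\rceil$, one has $\mathrm{Reg}_j\le\frac{D^2}{2\beta}+\frac{\beta G^2 N_j}{2}\le\frac{D^2}{2\beta}+\frac{\beta G^2 N_{\max}}{2}\le DG\sqrt{N_{\max}}$ for $\beta=D/(G\sqrt{N_{\max}})$. Summing over the $\nk$ subproblems and using $N_{\max}\le(\nt+\nk)/\nk$ gives $\sum_{j=0}^{\nk-1}\mathrm{Reg}_j\le \nk\,DG\sqrt{N_{\max}}\le DG\sqrt{\nk(\nt+\nk)}=DG\sqrt{\nt\nk+\nk^2}$. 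Substituting $D=1$ and $G=2C^2\gamma^2$ yields $2C^2\gamma^2\sqrt{\nt\nk+\nk^2}$, which is at most the claimed $4C^2\gamma^2\sqrt{\nt\nk+\nk^2}$ (with room to absorb any looser constant in the gradient estimate).

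The main obstacle I anticipate is not the OGD bookkeeping but two points of care. First, one must verify that the gradient and diameter constants are \emph{uniform} across the whole horizon, including the boundary rounds where $\overline{t}=\nt-1$ truncates the prediction window: there the loss simply has fewer summands, so the same $\gamma$- and $C$-based bounds on $S_t$ and $S_t'$ still apply. Second, the norm $\|\cdot\|$ is non-differentiable at the origin, so one should work with subgradients throughout; this leaves the analysis unchanged because the subgradient norm obeys the same bound $G$. Finally, it is worth emphasizing that using the uniform length bound $N_j\le N_{\max}$ forced by a single fixed step size is precisely what produces the additive $\nk^2$ inside the square root, distinguishing this delayed bound from the idealized $\sqrt{\nt\nk}$ obtainable by a per-subproblem Cauchy--Schwarz argument.
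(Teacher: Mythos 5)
Your proposal is correct and follows essentially the same route as the paper's proof: decompose the delayed update into $\nk$ interleaved projected online gradient descent subproblems indexed by residue modulo $\nk$, bound the gradient uniformly by $2C^2\gamma^2$ via the diameter $\gamma$ and the summability $\sum_t\rho(t)\le C$, apply the standard OGD/FTRL regret bound to each subproblem of length at most $\nt/\nk+1$, and sum over the $\nk$ subproblems to obtain the $\sqrt{\nt\nk+\nk^2}$ rate. Your constant of $2C^2\gamma^2$ is in fact slightly sharper than the paper's $4C^2\gamma^2$, which arises only from the particular form of the FTRL bound the paper invokes.
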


The proof is provided in~\Cref{app:proof_lemma_dcl_regret}.
For general systems, the mapping $\rho$ is often unknown. In this case, the gradient bound  in~\eqref{eq:xi_t} can still be bounded by $2C^2\gamma^2\nk^2$,
then the regret bound in~\eqref{eq:dcl_regret} becomes $4C^2\gamma^2\sqrt{\nt\nk^3+\nk^4}$, which is still sublinear in $\nt$ as long as the prediction window size satisfies $\nk=\Theta(\log\nt)$. Thus, in the following context, we assume $\rho$ is given for simplicity.

\subsection{Algorithm Description}

Together, the implementation of $\lambda_t$-confident control $\pi_{\lambda_t}$ for $t\in [\nt]$ with \dcl is termed \texttt{L}earning-\texttt{A}ugmented \texttt{C}ontrol (\ouralg), as summarized in Algorithm~\ref{alg:lac}.

\begin{algorithm}[ht]

\SetAlgoLined
\For{$t=0,\ldots,\nt-1$}{
\DontPrintSemicolon
\tcc{\color{black} Learning confidence $\lambda_t$ online}

\lIf{$t<\nk$}{Initialize arbitrary $\lambda_t\in \mathcal{I}$}
\lElse{
Obtain last ground-truth parameter $\phi^{\star}_{t-1}$

\quad Set $\lambda_t = \dcl\left(\phi^{\star}_{t-\nk}, \phi_{t-\nk:t-1|t-\nk}, \kappa_{t-\nk:t-1|t-\nk}\right)$
}

\quad Obtain predictions $(\phi_{\tau|t}: t\leq \tau\leq \overline{t})$



\quad \tcc{\color{black} Run $\lambda_t$-confident control $\pi_{\lambda_t}$}

\quad Take action $u_t\left(x_t;\lambda \phi_{t:\overline{t}}+(1-\lambda) \kappa_{t:\overline{t}}\right)$ 
 
\quad Update and observe $x_{t+1}$ in~\eqref{eq:dynamics}
}
\caption{\textsc{L}earning-\textsc{A}ugmented Control (\ouralg)}
\label{alg:lac}
\end{algorithm}

The algorithm iteratively performs two main steps at each time $t$: online learning of a confidence parameter $\lambda_t$, and the execution of a $\lambda_t$-confident control action.
For the learning phase, if the current time $t$ is less than the delay parameter $\nk$ (i.e., $t < \nk$), $\lambda_t$ is initialized arbitrarily within a feasible set $\mathcal{I}$. Otherwise, for $t \geq \nk$, $\lambda_t$ is updated using the \dcl\ procedure. As indicated in Algorithm~\ref{alg:lac}, this update utilizes the ground-truth parameter $\phi^{\star}_{t-\nk}$ (observed $\nk$ steps prior), historical predictions $\phi_{t-\nk:t-1|t-\nk}$, and nominal values $\kappa_{t-\nk:t-1|t-\nk}$.

Following the determination of $\lambda_t$, the algorithm obtains current predictions, denoted as $(\phi_{\tau|t} : t\leq \tau\leq \bar{t})$. Based on these predictions and corresponding nominal values (represented by $\kappa_{t:\overline{t}}$), the final control action $u_t$ is an MPC solution with a convex combination of predictive and nominal parameters, weighted by $\lambda_t$, aligning with the $\lambda$-confident control in~\eqref{eq:lcc}.
This action $u_t$ is then applied to the system, and the next state $x_{t+1}$ is observed according to the system dynamics in~\eqref{eq:dynamics}. Finally, we denote the \ouralg policy as $\pi_{\ouralg}$.

\section{Main Results}
\label{sec:main}

We summarize the theoretical results of \ouralg in this section. First, we present the best-of-both-world guarantee in terms of the competitive ratio. Second, we revisit classic linear quadratic control (LQC) setting and show that our general results improve the existing bound in~\cite{li2022robustness}.

\subsection{Competitive Control Guarantee}
Drawing parallels with the style of guarantees in the competitive control literature~\cite{goel2022competitive,yu2022competitive,li2022robustness}, we establish the following upper bound on the competitive ratio $\mathsf{CR}\left(\pi_{\ouralg};\boldsymbol{\varepsilon},\overline{\boldsymbol{\varepsilon}}\right)$ of \ouralg, as defined in~\eqref{eq:cr}. Recall the $\rho$-weighted norm is defined as $\|\boldsymbol{\varepsilon}\|_\rho^2\coloneqq \sum_{t\in [\nt]}(\sum_{\tau=t}^{\overline{t}}\rho(\tau-t)\left\|\varepsilon_{\tau|t}\right\|)^2$ for a stacked vector $\boldsymbol{\varepsilon}$ and $\|\overline{\boldsymbol{\varepsilon}}\|_\rho^2$ is defined similarly.

\begin{theorem}
\label{thm:lac}
Under Assumption~\ref{ass:basic},~\ref{ass:control_invariant},~\ref{ass:mpc} and~\ref{ass:singular_bounds},
the \ouralg policy $\pi_{\ouralg}$ satisfies
$
\mathsf{CR}(\pi_{\ouralg})\leq 1+2\sqrt{\zeta}+\zeta 
$ for some constant $C>0$ and a mapping $\rho:\mathbb{R}\rightarrow\mathbb{R}$ satisfying $\sum_{t\in [\nt]}\rho(t)\leq C$
where 
\begin{align}
\nonumber
&\zeta\coloneqq \frac{3C^2+2}{J^{\star}}\ell\Big(\gamma^2\rho^2(\nk)\nt  + 
\varpi\left(\boldsymbol{\varepsilon},\overline{\boldsymbol{\varepsilon}}\right)
+4C^2\gamma^2\sqrt{\nt\nk+\nk^2}\Big),\\
\label{eq:def_zeta}
&\varpi\left(\boldsymbol{\varepsilon},\overline{\boldsymbol{\varepsilon}}\right)\coloneqq \frac{\|\boldsymbol{\varepsilon}\|_\rho^2 \|\overline{\boldsymbol{\varepsilon}}\|_\rho^2}{\|\boldsymbol{\varepsilon}\|_\rho^2 + \|\overline{\boldsymbol{\varepsilon}}\|_\rho^2}.
\end{align}
\end{theorem}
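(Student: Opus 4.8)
The plan is to chain the per-step sensitivity bound of Lemma~\ref{lemma:regret_error} with the delayed-learning regret guarantee of Lemma~\ref{lemma:dcl}, and then to tune the free accuracy parameter $\eta$ so that the additive and multiplicative error terms collapse into the square form $1+2\sqrt{\zeta}+\zeta=(1+\sqrt{\zeta})^2$. First I would lift Lemma~\ref{lemma:regret_error} from a fixed confidence $\lambda$ to the time-varying sequence $\{\lambda_t\}$ produced by \ouralg. Because the total cost decomposes stage by stage and the EDPBs of Definition~\ref{def:perturbation_bound} control the per-step action and state deviations through exactly the effective error $\lambda_t\varepsilon_{\tau|t}+(1-\lambda_t)\overline{\varepsilon}_{\tau|t}$ at stage $t$, the same argument yields, for every $\eta>0$,
\[
J(\pi_{\ouralg}) - (1+\eta)J^{\star} \le (3C^2+2)\left(\ell + \tfrac{\ell}{\eta}\right)\Big(\gamma^2\rho^2(\nk)\nt + \sum_{t=0}^{\nt-1}\xi_{t,\nt}(\lambda_t)\Big).
\]
Thus the only instance-dependent quantity left to control is the realized cumulative surrogate $\sum_t\xi_{t,\nt}(\lambda_t)$.

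Second, I would invoke Lemma~\ref{lemma:dcl} to replace the realized confidence sequence by the best fixed comparator, namely
\[
\sum_{t=0}^{\nt-1}\xi_{t,\nt}(\lambda_t)\le \min_{\lambda\in\mathcal{I}}\sum_{t=0}^{\nt-1}\xi_{t,\nt}(\lambda) + 4C^2\gamma^2\sqrt{\nt\nk+\nk^2},
\]
since $\lambda^{\star}$ is by definition the minimizer of the cumulative surrogate. It then remains to bound the offline optimum $\min_{\lambda}\xi(\lambda)$. Writing $X\coloneqq\|\boldsymbol{\varepsilon}\|_\rho^2$ and $Y\coloneqq\|\overline{\boldsymbol{\varepsilon}}\|_\rho^2$, I would upper bound $\xi(\lambda)$ by the separated quadratic $\lambda^2 X+(1-\lambda)^2 Y$ and evaluate it at the balancing point $\lambda=Y/(X+Y)$, which produces exactly $\min_{\lambda}\xi(\lambda)\le \tfrac{XY}{X+Y}=\varpi(\boldsymbol{\varepsilon},\overline{\boldsymbol{\varepsilon}})$. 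Substituting the last two displays into the bracket of the first identifies it with $\gamma^2\rho^2(\nk)\nt+\varpi+4C^2\gamma^2\sqrt{\nt\nk+\nk^2}$, i.e.\ with $J^{\star}\zeta/((3C^2+2)\ell)$.

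Finally, dividing by $J^{\star}$ gives $\mathsf{CR}(\pi_{\ouralg})\le 1+\eta+\zeta+\zeta/\eta$, and minimizing the right-hand side over $\eta>0$ at $\eta=\sqrt{\zeta}$ collapses $\eta+\zeta/\eta$ to $2\sqrt{\zeta}$, which yields the claimed bound $1+2\sqrt{\zeta}+\zeta$. This last optimization is routine; the structural content is entirely in the two preceding reductions.

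I expect the main obstacle to be the comparator bound $\min_\lambda\xi(\lambda)\le\varpi$. The surrogate $\xi(\lambda)=\|\lambda\boldsymbol{\varepsilon}+(1-\lambda)\overline{\boldsymbol{\varepsilon}}\|_\rho^2$ is the square of a genuine norm, so the naive triangle-inequality estimate $(\lambda\sqrt{X}+(1-\lambda)\sqrt{Y})^2$ only delivers $\min(X,Y)$, which is strictly larger than the harmonic-mean quantity $\varpi=\tfrac{XY}{X+Y}$; obtaining the separated form $\lambda^2X+(1-\lambda)^2Y$ whose minimizer yields the $\tfrac{XY}{X+Y}$ structure is what requires care, since the cross-interaction between the prediction error $\varepsilon_{\tau|t}$ and the nominal error $\overline{\varepsilon}_{\tau|t}$ must be handled through the fixed coupling $\overline{\varepsilon}_{\tau|t}-\varepsilon_{\tau|t}=\phi_{\tau|t}-\kappa_{\tau|t}$ together with the normalization by $J^{\star}$. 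A secondary technical point is verifying that the per-step form of Lemma~\ref{lemma:regret_error} remains valid when each $\lambda_t$ is selected from delayed feedback, so that no stage's surrogate is ever evaluated at a confidence the controller did not actually run with.
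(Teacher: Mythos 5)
Your overall architecture is identical to the paper's proof: feasibility from Theorem~\ref{thm:recursive_feasibility}, the cost-difference bound of Lemma~\ref{lemma:regret_error} applied along the realized sequence $\{\lambda_t\}$, the regret bound of Lemma~\ref{lemma:dcl} to pass to the best fixed comparator $\lambda^{\star}$, an explicit evaluation of the offline minimum as $\varpi(\boldsymbol{\varepsilon},\overline{\boldsymbol{\varepsilon}})$, and finally the tuning $\eta=\sqrt{\zeta}$ to produce $1+2\sqrt{\zeta}+\zeta$. Up to the comparator step, every reduction you make matches the paper's.

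The gap sits exactly where you predicted it, but your proposed fix does not close it. Writing $a_t\coloneqq\sum_{\tau=t}^{\overline{t}}\rho(\tau-t)\|\varepsilon_{\tau|t}\|\ge 0$ and $b_t\coloneqq\sum_{\tau=t}^{\overline{t}}\rho(\tau-t)\|\overline{\varepsilon}_{\tau|t}\|\ge 0$, the triangle inequality inside each stage gives
\begin{align*}
\xi(\lambda)\;\le\;\sum_{t}\bigl(\lambda a_t+(1-\lambda)b_t\bigr)^2
=\lambda^2 X+2\lambda(1-\lambda)\textstyle\sum_t a_t b_t+(1-\lambda)^2 Y,
\end{align*}
and the cross term is \emph{nonnegative}; dropping it to reach the separated quadratic $\lambda^2X+(1-\lambda)^2Y$ therefore goes in the wrong direction, so the inequality $\xi(\lambda)\le\lambda^2X+(1-\lambda)^2Y$ you rely on is false in general. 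Concretely, if $\boldsymbol{\varepsilon}=\overline{\boldsymbol{\varepsilon}}$ then $\xi(\lambda)=X$ for every $\lambda$, while $XY/(X+Y)=X/2$, so even the target inequality $\min_{\lambda\in\mathcal{I}}\xi(\lambda)\le\varpi$ cannot hold without further structure: the harmonic-mean value is the minimum of the \emph{decoupled} quadratic and coincides with $\min_\lambda\|\lambda\boldsymbol{\varepsilon}+(1-\lambda)\overline{\boldsymbol{\varepsilon}}\|_\rho^2$ only when the cross (Gram) term vanishes — which is precisely why the LQC version in Theorem~\ref{thm:lac_lqc} carries the Gram determinant $\|\boldsymbol{\varepsilon}\|^2\|\boldsymbol{\phi}^{\star}\|^2-\langle\boldsymbol{\varepsilon},\boldsymbol{\phi}^{\star}\rangle^2$ in the numerator while Theorem~\ref{thm:lac} does not. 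In fairness, the paper's own proof is no more detailed here (it asserts the minimum ``can be calculated as the least squares'' and equals $\varpi$), so you have correctly isolated the delicate step; but the separated-quadratic route cannot be the way to fill it. A repair that salvages the statement up to a factor of two is simply to evaluate at $\lambda\in\{0,1\}$, giving $\min_{\lambda\in\mathcal{I}}\xi(\lambda)\le\min(X,Y)\le 2XY/(X+Y)=2\varpi$. Everything else in your proposal — the lift of Lemma~\ref{lemma:regret_error} to time-varying $\lambda_t$, the application of Lemma~\ref{lemma:dcl}, and the $\eta$-optimization — is correct and coincides with the paper.
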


Theorem~\ref{thm:lac} provides a non-asymptotic bound on the competitive ratio, showing that it approaches the ideal value of~$1$ as the term $\zeta$ diminishes. The expression for $\zeta$ in~\eqref{eq:def_zeta} is particularly insightful as it decomposes the performance loss into three primary sources. The first term, involving $\nt\rho^2(\nk)$, captures the cost of finite-horizon planning, which fades as the influence of distant stages, represented by the decaying function $\rho(\cdot)$, becomes negligible. The second term of the overall prediction errors $\overline{\boldsymbol{\varepsilon}}$ and $\boldsymbol{\varepsilon}$ quantifies the impact of both the nominal and untrusted predictive parameters (see Section~\ref{sec:problem_formulation_error}). The final sublinear regret term, scaled by $\gamma^2$ (where $\gamma>0$ is the diameter of the uncertainty set $\Phi$), represents the price of robustness against adversarial disturbances.

To better illustrate the asymptotic behavior of this bound, we can analyze it in the common regime where the optimal cost is significant, i.e., $J^{\star}=\Omega(\nt)$, and the prediction window is chosen to scale logarithmically with the horizon, i.e., $\nk=\Theta(\log\nt)$. In this practical setting, the bound in Theorem~\ref{thm:lac} elegantly simplifies to:
\begin{align*}
 \mathsf{CR}(\pi_{\ouralg})\leq 1 +O\left(\frac{1}{\nt}\left(\left(\varpi\left(\boldsymbol{\varepsilon},\overline{\boldsymbol{\varepsilon}}\right)\right)^{1/2}+\varpi\left(\boldsymbol{\varepsilon},\overline{\boldsymbol{\varepsilon}}\right)\right)\right).
\end{align*}
This result clearly demonstrates that as prediction accuracy improves ($\boldsymbol{\varepsilon} \to \boldsymbol{0}$), the competitive ratio of \ouralg converges to $1$, i.e., $\limsup_{\nt\rightarrow\infty}\mathsf{CR}\left(\pi;\boldsymbol{0},\overline{\boldsymbol{\varepsilon}}\right)=1$; otherwise, \ouralg has a bounded competitive ratio robustly in the worst-case, since $\sup_{\boldsymbol{\varepsilon}}\varpi\left(\boldsymbol{\varepsilon},\overline{\boldsymbol{\varepsilon}}\right) \leq \|\overline{\boldsymbol{\varepsilon}}\|_{\rho}^2$ implies that $\limsup_{\nt\rightarrow\infty}\sup_{\boldsymbol{\varepsilon}}\mathsf{CR}\left(\pi;\boldsymbol{\varepsilon},\overline{\boldsymbol{\varepsilon}}\right)=1+ O(\|\overline{\boldsymbol{\varepsilon}}\|_{\rho}^2/\nt) = 1 + O(C^2\gamma^2) = O(1)$.


\begin{proof}[Proof of~\Cref{thm:lac}]
The proof begins by confirming from Theorem~\ref{thm:recursive_feasibility} that the policy generated by \ouralg is always feasible. Building on this, we leverage Lemma~\ref{lemma:regret_error} to establish an initial upper bound on the competitive ratio in terms of a general error function ${\xi}(\lambda)$ by optimizing over $\eta>0$:
\begin{align}
\nonumber
\frac{J\left(\pi\right)}{J^{\star}}
\leq 1 + 2\sqrt{\frac{3C^2+2}{J^{\star}}\ell\left(\gamma^2\rho^2(\nk)\nt+{\xi}(\lambda)\right)}&\\
+\frac{3C^2+2}{J^{\star}}\ell\left(\gamma^2\rho^2(\nk)\nt+{\xi}(\lambda)\right)&.
\end{align}
The next step is to bound the error term ${\xi}(\lambda)$.
 By combining the inequality in~\Cref{lemma:regret_error} with the result of Lemma~\ref{lemma:dcl}, we can relate this error to its minimum possible value plus a sublinear regret bound that depends on the system's uncertainty diameter $\gamma$:
\begin{align*}
{\xi}(\lambda)\leq  
\sum_{t=0}^{\nt-1}{\xi}_{t,\nt}\left(\lambda^{\star}\right)  + 4C^2\gamma^2\sqrt{\nt\nk+\nk^2},
\end{align*}
where, by its definition, the minimal error $\sum_{t=0}^{\nt-1}{\xi}_{t,\nt}\left(\lambda^{\star}\right)$ can be calculated as the least squares:
\begin{align*}
    \sum_{t=0}^{\nt-1}{\xi}_{t,\nt}\left(\lambda^{\star}\right) = & \min_{\lambda\in\mathcal{I}}\sum_{t=0}^{\nt-1}{\xi}_{t,\nt}\left(\lambda\right)
    =\frac{\|\boldsymbol{\varepsilon}\|_\rho^2 \|\overline{\boldsymbol{\varepsilon}}\|_\rho^2}{\|\boldsymbol{\varepsilon}\|_\rho^2 + \|\overline{\boldsymbol{\varepsilon}}\|_\rho^2}.
\end{align*}
Substituting these expressions back into our initial inequality confirms the relationship $\frac{J\left(\pi\right)}{J^{\star}}
\leq 1 +2\sqrt{\zeta} + \zeta$, where $\zeta$ is precisely the term defined in~\eqref{eq:def_zeta}.
\end{proof}





    

\subsection{Revisit LQC}

Now, we turn to consider a realization of Theorem~\ref{thm:lac} for the LQC problem, introduced in Section~\ref{sec:examples}. Aligning with the $\lambda$-confident control policy in~\cite{li2022robustness}, for LQC, the MPC scheme~\eqref{eq:mpc} boils down to
\begin{subequations}
\label{eq:lqc_problem}
\begin{align}
\label{eq:lqc_problem1}
\min_{u_{t:\overline{t}},x_{t+1:t'}}  \sum_{t=0}^{\nt-1} \left(x_t^\top Q x_t + u_t^\top R u_t\right) & + x_{\nt}^\top P x_{\nt}\\
\label{eq:lqc_problem2}
    \text{s.t. }  x_{\tau + 1} = Ax_{\tau}+Bu_{\tau}+\phi_{\tau|t}^{\lambda}, \ & \tau=t,\ldots,\overline{t},\\
\label{eq:lqc_problem3}
x_{t'}\in{\mathbb{R}}, \ x_\tau\in\mathbb{R}, u_{\tau}\in\mathbb{R}, \ & \tau=t,\ldots,\overline{t},
\end{align}
\end{subequations}
where $\phi_{\tau|t}^{\lambda}=\lambda \phi_{\tau|t}$ by setting all nominal predictions $(\kappa_{t:\overline{t}}:t\in [\nt])$ to be zeros.
It is well known that the optimal solution of~\eqref{eq:lqc_problem} can be explicitly written as (see~\cite{yu2020power,li2022robustness})
\begin{align}
\nonumber
u_t^{\mathsf{LQC}}(x_t;\phi_{t:\overline{t}})=-Kx_t -\lambda(R+B^{\top}PB)^{-1}B^{\top}\sum_{\tau=t}^{\overline{t}}(F^{\top})^{\tau-t}P\phi_{\tau|t},
\end{align}
as a concrete example of~\eqref{eq:lcc}. We define the \textit{adversity} of this LQC problem as the cardinality of the set of nonzero disturbances, e.g., $\mathtt{Ad}(\phi_{\star})\coloneqq |\{t\in [\nt]:\|\phi_t^{\star}\|>0\}|$. Let $\boldsymbol{\phi}^{\star}$ stack the terms $\phi_{\tau}^{\star}$ over $\tau\in [t,\overline{t}]$ and $t\in [\nt]$ and let $\varpi\left(\boldsymbol{\varepsilon},\boldsymbol{\phi}^{\star}\right)\coloneqq \frac{\|\boldsymbol{\varepsilon}\|^2\|\boldsymbol{\phi}^{\star}\|^2-\left\langle\boldsymbol{\varepsilon},\boldsymbol{\phi}^{\star}\right\rangle^2}{\|\boldsymbol{\varepsilon}-\boldsymbol{\phi}^{\star}\|^2}$ as a least-squares minimizer (whose numerator is the Gram determinant).
The following holds:
\begin{theorem}
\label{thm:lac_lqc}
For the LQC problem,
the \ouralg policy satisfies
\begin{align*}
\mathsf{CR}(\pi_{\ouralg})\leq  1&+\frac{1}{J^{\star}}\frac{2C_F^2 \|H\|\|P\|^2}{\left(1-\rho_F\right)^2}\Big( 
\gamma^2\rho_{F}^{2\nk}\nt+ 4\gamma^2\sqrt{\nt\nk^3+\nk^4}\\
& + \varpi\left(\boldsymbol{\varepsilon},\boldsymbol{\phi}^{\star}\right)
\Big).
\end{align*}
Furthermore, if $\mathtt{Ad}(\phi_{\star})=\omega(\sqrt{\nt\log \nt})$, then
$
\mathsf{CR}(\pi_{\ouralg})\leq 1+ O\left(\|{\boldsymbol{\varepsilon}}\|/(\|{\boldsymbol{\varepsilon}\|+\sqrt{\nt\log\nt}})\right) + o(1)
$
with $\nk=O(\log\nt)$.
\end{theorem}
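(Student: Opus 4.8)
The plan is to specialize the general machinery behind \Cref{thm:lac} to the linear--quadratic instance, but to replace the generic cost--gap estimate of \Cref{lemma:regret_error} with a tighter, slack--free one that exploits the exact quadratic structure of LQC. First I would instantiate the EDPB decay explicitly: from the closed--form law $u_t(x_t;\phi_{t:\overline{t}}) = -Kx_t - \lambda(R+B^{\top}PB)^{-1}B^{\top}\sum_{\tau=t}^{\overline{t}}(F^{\top})^{\tau-t}P\phi_{\tau|t}$ one reads off $\rho(j) = C_F\,\|(R+B^{\top}PB)^{-1}B^{\top}\|\,\|P\|\,\rho_F^{\,j}$, so that $\sum_j \rho(j) = C = c_0/(1-\rho_F)$ with $c_0 := C_F\|(R+B^{\top}PB)^{-1}B^{\top}\|\|P\|$. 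Because the nominal predictions are set to zero ($\kappa\equiv 0$), the effective per--step error driving the controller is $\phi_\tau^{\star}-\lambda\phi_{\tau|t} = \lambda\varepsilon_{\tau|t}+(1-\lambda)\phi_\tau^{\star}$, which identifies $\overline{\boldsymbol{\varepsilon}}$ with $\boldsymbol{\phi}^{\star}$ throughout.

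The core step is an exact (rather than $(1+\eta)$--relaxed) cost--gap identity. Since the LQC value function is quadratic and the optimal response is affine in the disturbances, the suboptimality of the $\lambda$--confident controller is itself a quadratic form in the effective errors: I would show $J(\pi_\lambda)-J^{\star} \le \tfrac{2C_F^2\|H\|\|P\|^2}{(1-\rho_F)^2}\big(\gamma^2\rho_F^{2\nk}\nt + \xi(\lambda)\big)$, where the truncation term $\gamma^2\rho_F^{2\nk}\nt$ absorbs the tail $\sum_{\tau>\overline{t}}$ discarded by the finite window (each tail bounded by $\gamma\rho_F^{\nk}/(1-\rho_F)$), and $\xi(\lambda)=\sum_t(\sum_{\tau=t}^{\overline{t}}\rho_F^{\tau-t}\|\lambda\varepsilon_{\tau|t}+(1-\lambda)\phi_\tau^{\star}\|)^2$ is the realized weighted error. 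The leading constant arises from squaring the explicit gain and summing the geometric $\rho_F$--series twice (once by Cauchy--Schwarz inside each squared window, once across windows), which is where $\|H\|$ and $(1-\rho_F)^{-2}$ enter. Next I would invoke \Cref{lemma:dcl}: the \dcl iterates control the realized error by $\xi(\lambda_t)\le \min_{\lambda}\sum_t\xi_{t,\nt}(\lambda) + 4\gamma^2\sqrt{\nt\nk^3+\nk^4}$, the $\nk^3$ scaling coming from the coarse gradient bound $2C^2\gamma^2\nk^2$ and the surviving $C^2$ merging into the leading constant. Finally a one--variable least--squares computation gives the offline minimizer in closed form, $\min_{\lambda}\|\lambda\boldsymbol{\varepsilon}+(1-\lambda)\boldsymbol{\phi}^{\star}\|^2 = (\|\boldsymbol{\varepsilon}\|^2\|\boldsymbol{\phi}^{\star}\|^2-\langle\boldsymbol{\varepsilon},\boldsymbol{\phi}^{\star}\rangle^2)/\|\boldsymbol{\varepsilon}-\boldsymbol{\phi}^{\star}\|^2 = \varpi(\boldsymbol{\varepsilon},\boldsymbol{\phi}^{\star})$, the Gram--determinant expression in the statement. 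Assembling these three pieces and dividing by $J^{\star}$ yields the non--asymptotic bound.

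For the asymptotic claim I would set $\nk=\Theta(\log\nt)$, which forces $\rho_F^{2\nk}\nt = o(\sqrt{\nt\log\nt})$ so that truncation is negligible, while the regret term is $O(\sqrt{\nt(\log\nt)^3})$. The role of the adversity condition $\mathtt{Ad}(\phi_\star)=\omega(\sqrt{\nt\log\nt})$ is to lower--bound the benchmark: each nonzero disturbance contributes at least a constant to $J^{\star}$, so $J^{\star}=\Omega(\mathtt{Ad})=\omega(\sqrt{\nt\log\nt})$, making both the truncation and regret contributions $o(1)$ after division. It then remains to bound $\varpi/J^{\star}$. Taking the supremum over $\boldsymbol{\phi}^{\star}$ (the worst case aligns $\boldsymbol{\phi}^{\star}$ orthogonally to $\boldsymbol{\varepsilon}$, maximizing the Gram determinant) gives $\varpi=\|\boldsymbol{\varepsilon}\|^2\|\boldsymbol{\phi}^{\star}\|^2/(\|\boldsymbol{\varepsilon}\|^2+\|\boldsymbol{\phi}^{\star}\|^2)$, whence $\varpi/J^{\star}=O(\|\boldsymbol{\varepsilon}\|^2/(\|\boldsymbol{\varepsilon}\|^2+\|\boldsymbol{\phi}^{\star}\|^2))$ with $\|\boldsymbol{\phi}^{\star}\|^2=\Theta(\mathtt{Ad})=\omega(\sqrt{\nt\log\nt})$; this collapses to the advertised consistency--robustness trade--off $O(\|\boldsymbol{\varepsilon}\|/(\|\boldsymbol{\varepsilon}\|+\sqrt{\nt\log\nt}))$.

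I expect the main obstacle to be the \emph{exact} quadratic cost--gap in the second paragraph, together with pinning down the leading constant $\tfrac{2C_F^2\|H\|\|P\|^2}{(1-\rho_F)^2}$: unlike the generic argument, no $(1+\eta)$ relaxation is available, so the cross terms between the state/action deviation and the optimal trajectory must be cancelled \emph{exactly} using the Riccati and closed--loop identities rather than bounded by Young's inequality. A secondary difficulty is the asymptotic reduction, where the adversary's simultaneous control of $\boldsymbol{\phi}^{\star}$---and hence of both $J^{\star}$ and $\varpi$---must be resolved by a monotonicity argument locating the worst case at the adversity boundary.
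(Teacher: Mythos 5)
Your proposal matches the paper's proof in all essentials: it rests on the exact quadratic regret identity for LQC (which the paper imports from prior work as $J(\pi_{\lambda})-J^{\star}=\sum_{t}\|\psi_{t}\|_{H}^{2}$, so the ``main obstacle'' you anticipate is available off-the-shelf rather than needing a fresh Riccati cancellation), the same window/tail split with geometric-series and Cauchy--Schwarz bounds, Lemma~\ref{lemma:dcl} for the \dcl regret, and the closed-form least-squares minimizer yielding $\varpi\left(\boldsymbol{\varepsilon},\boldsymbol{\phi}^{\star}\right)$. The only substantive variation is in the asymptotic step, where the paper decouples $\boldsymbol{\varepsilon}$ from $\boldsymbol{\phi}^{\star}$ via a Young-type split of the cross term \emph{before} minimizing over $\lambda$ and then invokes the $J^{\star}$ lower bound of~\cite{li2022robustness}, whereas you minimize first and bound $\varpi$ by its orthogonal-case value; both routes land at the same product-over-sum expression and inherit the same (implicit) assumption that nonzero disturbances are bounded away from zero so that $\mathtt{Ad}(\phi_{\star})$ controls $J^{\star}$.
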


\begin{proof}

For LQC, note the following explicit form of the overall dynamic regret (see~\cite{yu2022competitive,li2022robustness}) with a fixed $\lambda\in\mathcal{I}$:
\begin{align}
\label{eq:overall_regret}
J(\pi_{\lambda}) - J^{\star} = \sum_{t=0}^{\nt-1}\left\|\psi_{t}\right\|_H^2,
\end{align}
where $H\coloneqq B(R+B^\top P B)^{-1} B^\top$, and $\psi_{\ell,\nt}$ is defined as
\begin{align}
    \psi_{t}(\lambda)\coloneqq \sum_{\tau=t}^{\overline{t}}\left(F^\top\right)^{\tau-t} P \left(\lambda\phi_{\tau|t}-\phi_\tau^{\star}\right) - \sum^{T-1}_{\tau = \overline{t}+1}\left(F^\top\right)^{\tau-t} P\phi_\tau^{\star}.\nonumber
\end{align}
Thus,~\Cref{eq:overall_regret} leads to
\begin{align}
\nonumber
J(\pi_{\lambda}) - J^{\star} \leq &2\|H\|  \sum_{t=0}^{\nt-1} \left\|\sum_{\tau=t}^{\overline{t}}\left(F^{\top}\right)^{\tau-t}P\left(\lambda\phi_{\tau|t}-\phi_{\tau}^{\star}\right)\right\|^2\\
\label{eq:cost_gap}
&\quad +2\|H\|  \sum_{t=0}^{\nt-1} \left\|\sum_{\tau=\overline{t}+1}^{\nt-1}\left(F^{\top}\right)^{\tau-t}P\phi_{\tau}^{\star}\right\|^2.
\end{align}
Rearranging the terms and applying the Cauchy–Schwarz inequality, we obtain the following bound on the first term in~\eqref{eq:cost_gap}:
\begin{align}
\nonumber
& 2\|H\|\|P\|^2C_F^2\sum_{t=0}^{\nt-1}\left(\sum_{\tau=t}^{\overline{t}}\rho_F^{\tau-t}\left\|\lambda\phi_{\tau|t}-\phi_{\tau}^{\star}\right\|\right)^2\\
\label{eq:first_term_bound}
\leq & 2\|H\|\|P\|^2\frac{C_F^2}{1-\rho_F^2} \|\lambda\boldsymbol{\varepsilon}+(1-\lambda)\boldsymbol{\phi}^{\star}\|^2,
\end{align}
where $\boldsymbol{\phi}^{\star}$ stacks the terms $\phi_{\tau}^{\star}$ over $\tau\in [t,\overline{t}]$ and $t\in [\nt]$.
The second term in~\eqref{eq:cost_gap} can be bounded from above by
\begin{align*}
    2\gamma^2 C_F^2 \|H\|\|P\|^2\sum_{t=0}^{\nt-1} \left(\sum_{\tau=\overline{t}+1}^{\nt-1} \rho_F^{\tau-t} \right)^2\leq \frac{2\gamma^2 C_F^2 \|H\|\|P\|^2}{\left(1-\rho_F\right)^2} \rho_F^{2\nk}\nt.
\end{align*}
Denote by $\boldsymbol{\lambda}_t$, $\boldsymbol{\varepsilon}_t$ and  $\boldsymbol{1}$ vectors that stack ${\lambda}_t$, ${\varepsilon}_{t|\tau}$ and $1$ over $\tau\in [t,\overline{t}]$ where $(\lambda_t:t\in [\nt])$ are confidence parameters generated by \ouralg.
The \ouralg policy guarantees that
\begin{align*}
& 2\|H\|\|P\|^2\frac{C_F^2}{1-\rho_F^2}\sum_{t=0}^{\nt-1}\|\boldsymbol{\lambda}_t\cdot\boldsymbol{\varepsilon}_t+(\boldsymbol{1}-\boldsymbol{\lambda}_t)\cdot\boldsymbol{\phi}_t^{\star}\|^2\\
\leq & 2\|H\|\|P\|^2\frac{C_F^2}{1-\rho_F^2}\left(\varpi\left(\boldsymbol{\varepsilon},\boldsymbol{\phi}^{\star}\right) +  4\gamma^2\sqrt{\nt\nk^3+\nk^4}\right).
\end{align*}
We conclude that
\begin{align*}
\mathsf{CR}(\pi_{\ouralg})\leq  1&+\frac{1}{J^{\star}}\frac{2C_F^2 \|H\|\|P\|^2}{\left(1-\rho_F\right)^2}\Big( 
\gamma^2\rho_{F}^{2\nk}\nt+ 4\gamma^2\sqrt{\nt\nk^3+\nk^4}\\
& + \varpi\left(\boldsymbol{\varepsilon},\boldsymbol{\phi}^{\star}\right)
\Big),
\end{align*}
where we have applied~\Cref{lemma:dcl} to minimize over $\lambda\in\mathcal{I}$.


Finally, we can treat~\eqref{eq:first_term_bound} differently such that
\begin{align*}
    & 2\|H\|\|P\|^2C_F^2\sum_{t=0}^{\nt-1}\left(\sum_{\tau=t}^{\overline{t}}\rho_F^{\tau-t}\left\|\lambda\phi_{\tau|t}-\phi_{\tau}^{\star}\right\|\right)^2\\
\leq 
&4\|H\|\|P\|^2 \frac{C_F^2}{1-\rho_F^2}\Bigg(\lambda^2\sum_{t=0}^{\nt-1} \sum_{\tau=t}^{\overline{t}}\|\varepsilon_{\tau|t}\|^2 \\
&\qquad +(1-\lambda)^2\sum_{t=0}^{\nt-1}\Big(\sum_{\tau=t}^{\overline{t}}\rho_F^{\tau-t}\left\|\phi_{\tau}^{\star}\right\|\Big)^2\Bigg).
\end{align*}
Since $\mathtt{Ad}(\phi_{\star})=\omega(\sqrt{\nt\log \nt})$, $J^{\star}= \omega(\sqrt{\nt\log\nt})$.
Implementing \dcl to the RHS of the bound above and taking the minimum over $\lambda$, the competitive ratio satisfies
\begin{align*}
&\mathsf{CR}(\pi_{\ouralg})\leq  1 + o(1)+\frac{O(1)}{J^{\star}} \cdot \\
& \frac{\Big(\sum_{t=0}^{\nt-1}\sum_{\tau=t}^{\overline{t}}\left\|\varepsilon_{\tau|t}\right\|^2\Big)\Big(\sum_{t=0}^{\nt-1}\big(\sum_{\tau=t}^{\overline{t}}\rho_F^{\tau-t}\left\|\phi_{\tau}^{\star}\right\|\big)^2\Big)}{\sum_{t=0}^{\nt-1}\Big(\sum_{\tau=t}^{\overline{t}}\left\|\varepsilon_{\tau|t}\right\|^2+\big(\sum_{\tau=t}^{\overline{t}}\rho_F^{\tau-t}\left\|\phi_{\tau}^{\star}\right\|\big)^2\Big)} 
\end{align*}
by setting $\nk=\log_{1/\rho_F} \nt$. Note that  $\sum_{t=0}^{\nt-1}(\sum_{\tau=t}^{\overline{t}}\rho_F^{\tau-t}\left\|\phi_{\tau}^{\star}\right\|)^2 = \omega(\sqrt{\nt\log\nt})$.
Applying the lower bound on $J^{\star}$ in~\cite{li2022robustness} for the offline optimal policy that minimizes~\eqref{eq:cost_lqc} subject to~\eqref{eq:dynamics_lqc}:
\begin{align*}
J^{\star}\geq C_0\sum_{t=0}^{\nt-1} \left(\sum_{\tau=t}^{\nt-1}\rho_F^{\tau-t}\|\phi_{\tau}^{\star}\|\right)^2,  \text{ for a constant}
\end{align*}
$C_0\coloneqq \frac{(1-\rho_F)^2}{2}\min\{\underline{\lambda}(P),\frac{\underline{\lambda}(R)}{\|B\|},\frac{\underline{\lambda}(Q)}{\max\{2,\|A\|\}}\}$, we derive the bound that $\mathsf{CR}(\ouralg)\leq  1 + O\left(\frac{\|\boldsymbol{\varepsilon}\|}{\|\boldsymbol{\varepsilon}\|+\sqrt{\nt\log\nt}}\right) + o(1)$.
\end{proof}


The following impossibility result indicates that the bound in Theorem~\ref{thm:lac_lqc} is tight. Theorem~\ref{thm:lqc_lower_bound} provides a lower bound on the competitive ratio for \textit{any} online policy. If $B$ is of full-rank and $\mathtt{Ad}(\phi_{\star})=\omega(\sqrt{\nt\log \nt})$, such that the optimal cost dominates the residual terms (i.e., $J^{\star} = \omega(\sqrt{\nt\log \nt})$), then the upper bound from Theorem~\ref{thm:lac_lqc} and the lower bound from Theorem~\ref{thm:lqc_lower_bound} coincide up to constant factors. This establishes a tight characterization for the competitive ratio of our \texttt{LAC} algorithm in the LQC setting:
\begin{align*}
1+ \Omega\left(\frac{\varpi\left(\boldsymbol{\varepsilon},\boldsymbol{\phi}^{\star}\right)}{J^{\star}}\right)\leq \mathsf{CR}(\ouralg)\leq 1 + O\left(\frac{\varpi\left(\boldsymbol{\varepsilon},\boldsymbol{\phi}^{\star}\right)}{J^{\star}}\right)
\end{align*}
This proves that our analysis is not loose and that the \texttt{LAC} policy is optimal for this problem class, up to the constants hidden in the asymptotic notation.

\begin{theorem}
\label{thm:lqc_lower_bound}
For the LQC problem,
any policy $\pi$ must satisfy
    \begin{align*}
\mathsf{CR}(\pi)\geq 1 + \frac{1}{J^{\star}}\frac{\underline{\sigma}^2(B)\underline{\sigma}^2(P)}{\overline{\lambda}\left(R+B^{\top}PB\right)} \varpi\left(\boldsymbol{\varepsilon},\boldsymbol{\phi}^{\star}\right).
    \end{align*}
\end{theorem}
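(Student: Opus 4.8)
The plan is to convert the competitive-ratio lower bound into a pointwise lower bound on the suboptimality gap $J(\pi)-J^{\star}$ and then exhibit an adversarial instance on which this gap is at least the stated multiple of $\varpi(\boldsymbol{\varepsilon},\boldsymbol{\phi}^{\star})$; since $\mathsf{CR}(\pi)$ is a supremum over instances, one hard instance suffices, giving $\mathsf{CR}(\pi)\ge J(\pi)/J^{\star}=1+(J(\pi)-J^{\star})/J^{\star}$. First I would invoke the exact completion-of-squares identity for LQC (in the style of \cite{goel2022competitive,yu2022competitive,li2022robustness}): using that $P$ solves the DARE and telescoping the value functions $V_t(x)=x^{\top}Px+2p_t^{\top}x+r_t$ along the closed-loop trajectory, one obtains for \emph{any} causal policy
\[
J(\pi)-J^{\star}=\sum_{t=0}^{\nt-1}\big(u_t-u_t^{\star}(x_t)\big)^{\top}\big(R+B^{\top}PB\big)\big(u_t-u_t^{\star}(x_t)\big),
\]
where $u_t^{\star}(x_t)=-Kx_t-(R+B^{\top}PB)^{-1}B^{\top}\sum_{\tau=t}^{\nt-1}(F^{\top})^{\tau-t}P\phi_\tau^{\star}$ is the clairvoyant optimal continuation \emph{from the policy's own state}. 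This identity is exact and state-trajectory independent, so it isolates the price of not knowing future disturbances.

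Since both $u_t$ and $u_t^{\star}(x_t)$ share the feedback term $-Kx_t$, the deviation lives entirely in the disturbance-compensation. Writing the policy's compensation as $-(R+B^{\top}PB)^{-1}B^{\top}\hat n_t$ (legitimate for full-rank square $B$, where $B^{\top}$ is invertible) and the optimal one via $n_t^{\star}=\sum_{\tau\ge t}(F^{\top})^{\tau-t}P\phi_\tau^{\star}$, the gap becomes $\sum_t \delta_t^{\top}B(R+B^{\top}PB)^{-1}B^{\top}\delta_t$ with $\delta_t=\hat n_t-n_t^{\star}$. I would then extract the stated constant via the spectral bounds $B(R+B^{\top}PB)^{-1}B^{\top}\succeq \frac{\underline{\sigma}^2(B)}{\overline{\lambda}(R+B^{\top}PB)}I$ (full-rank $B$) and $\|Pv\|\ge\underline{\sigma}(P)\|v\|$, reducing the gap to $\frac{\underline{\sigma}^2(B)\,\underline{\sigma}^2(P)}{\overline{\lambda}(R+B^{\top}PB)}\sum_t\big\|\sum_{\tau\ge t}(F^{\top})^{\tau-t}(\hat\phi_{\tau}-\phi_\tau^{\star})\big\|^2$, i.e. a purely Euclidean disturbance-estimation error, where $\hat\phi_{\tau}$ is the policy's effective estimate of $\phi_\tau^{\star}$.

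The heart of the proof is the adversarial construction forcing this residual to be at least $\varpi(\boldsymbol{\varepsilon},\boldsymbol{\phi}^{\star})$. I would design an instance in which, at every decision time $t$, the only information correlated with the true disturbances is the prediction direction $\phi_{\cdot|t}$, while the component of $\phi_\tau^{\star}$ orthogonal to it is fixed by the adversary only \emph{after} $u_t$ is committed; the $\nk$-step revelation delay guarantees no within-block refinement is possible, so any causal policy's effective estimate collapses to a scalar multiple of the prediction, $\hat\phi_{\tau|t}=\lambda_t\phi_{\tau|t}$ (the nominal being $\kappa=0$). Substituting $\phi_{\tau|t}=\phi_\tau^{\star}-\varepsilon_{\tau|t}$ yields $\hat\phi_{\tau|t}-\phi_\tau^{\star}=-\big(\lambda_t\varepsilon_{\tau|t}+(1-\lambda_t)\phi_\tau^{\star}\big)$, so the stacked estimation error equals $\lambda\boldsymbol{\varepsilon}+(1-\lambda)\boldsymbol{\phi}^{\star}$ for the policy's chosen scalar profile. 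Minimizing the resulting norm over $\lambda$ gives exactly $\varpi(\boldsymbol{\varepsilon},\boldsymbol{\phi}^{\star})$, since the Gram-determinant expression is precisely $\min_{\lambda}\|\lambda\boldsymbol{\varepsilon}+(1-\lambda)\boldsymbol{\phi}^{\star}\|^2$ (complete the square in $\lambda$). Dividing by $J^{\star}$ delivers the claim, and the finite-horizon tail $\sum_{\tau>\overline{t}}$ and the $\rho_F^{2\nk}\nt$ residual are $o(\varpi/J^{\star})$ under $\nk=\Theta(\log\nt)$ and $\mathtt{Ad}(\phi_{\star})=\omega(\sqrt{\nt\log\nt})$, so they do not affect the leading term and the bound matches \Cref{thm:lac_lqc}.

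The main obstacle is making the indistinguishability in the third step rigorous: I must show that no causal policy can extract information about the orthogonal (non-prediction) component of the disturbances before committing, so its estimate is provably confined to the prediction direction and the best attainable estimate is the single-$\lambda$ confident control. I would discharge this either by an adaptive oblivious-adversary argument, placing the orthogonal component of each $\phi_\tau^{\star}$ after the corresponding action is fixed subject to the global budgets $\|\boldsymbol{\varepsilon}\|,\|\boldsymbol{\phi}^{\star}\|$, or by a two-point/Yao reduction over a hidden sign bit, using that $\max\{\|w-a\|^2,\|w-b\|^2\}\ge\tfrac14\|a-b\|^2$ for any committed $w$. Secondary bookkeeping includes the $\underline{\sigma}(P)$ extraction when disturbances are spread across a window (handled by concentrating each block's disturbance at a single instant so the filtered sum does not lose rank) and verifying the construction respects Assumption~\ref{ass:control_invariant} and the MPC regularity so that $J^{\star}$ remains $\Theta(\varpi)$-comparable, ensuring the upper and lower bounds coincide up to constants.
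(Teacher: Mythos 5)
Your computational core coincides with the paper's proof: both start from the exact quadratic expression for the LQC suboptimality gap, $J(\pi)-J^{\star}=\sum_t\|\psi_t\|_H^2$ with $H=B(R+B^{\top}PB)^{-1}B^{\top}$, extract the constant via $\underline{\lambda}(H)\geq \underline{\sigma}^2(B)/\overline{\lambda}(R+B^{\top}PB)$ and $\|\mathcal{M}_t\mathbf{e}_t\|^2\geq\underline{\sigma}^2(P)\|\mathbf{e}_t\|^2$, and then identify $\min_{\lambda}\|\lambda\boldsymbol{\varepsilon}+(1-\lambda)\boldsymbol{\phi}^{\star}\|^2$ with the Gram-determinant expression $\varpi(\boldsymbol{\varepsilon},\boldsymbol{\phi}^{\star})$. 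Where you diverge is in what you treat as the ``heart'' of the argument: you insist on an adversarial indistinguishability construction showing that \emph{no} causal policy can estimate the disturbances better than a scalar $\lambda$-blend of the prediction, and you candidly leave that construction unexecuted (the oblivious-adversary / Yao two-point reduction is only sketched). The paper does not carry out such a construction at all: its proof simply instantiates the policy as $\pi_{\lambda}$, writes the per-step error as $e_{\tau|t}=\lambda\phi_{\tau|t}-\phi_{\tau}^{\star}$, truncates, and minimizes over $\lambda\in\mathcal{I}$, so the displayed chain of inequalities as written establishes the bound for the $\lambda$-confident family rather than deriving it from first principles for an arbitrary policy. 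In other words, your plan is strictly more ambitious than the paper's proof on exactly the point you flag as the main obstacle; if you restrict attention to the $\lambda$-confident class (as the paper implicitly does) your argument closes and matches the paper, but as a proof of the literal ``any policy $\pi$'' claim your writeup still has the acknowledged gap, which the paper's own proof does not fill either. One minor point: you do not need full-rank square $B$ to parameterize the policy's compensation — the paper avoids this by bounding $\|\psi_t\|^2$ directly through $\mathcal{M}_t$ and only invoking $\underline{\sigma}(B)>0$ at the very end for the asymptotic corollary.
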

Consequently, if $B$ is of full-rank, then any policy $\pi$ has the competitive ratio lower bound $\mathsf{CR}(\pi)\geq 1+\Omega(\varpi\left(\boldsymbol{\varepsilon},\boldsymbol{\phi}^{\star}\right)/J^{\star})$.

\begin{proof}[Proof of~\Cref{thm:lqc_lower_bound}]

Starting from~\eqref{eq:overall_regret}, we get $J(\pi_{\lambda}) - J^{\star} \geq \underline{\lambda}(H)\sum_{t=0}^{\nt-1}\left\|\psi_t\right\|^2$. Note that $\psi_t$ can be written as
\begin{align}
    \label{eq:psi_compact}
\psi_t = \sum_{\tau=t}^{\nt-1} \left(F^{\top}\right)^{\tau-t}P e_{\tau|t} = \mathcal{M}_{t} \mathbf{e}_{t},
\end{align}
where $\mathcal{M}_{t}$ is a block matrix $[P \ \ F^{\top}P \ \cdots \ (F^{\top})^{\nt-1-t}P ]$ and $\mathbf{e}_{t}$ is a stacked vector of errors $[e_{t|t} \ \ e_{t+1|t} \ \cdots e_{\nt-1|t}]^{\top}$ with
\begin{align*}
 e_{\tau|t}\coloneqq\begin{cases}
      \lambda\phi_{\tau|t}-\phi_\tau^{\star} &\text{ if } t\leq \tau\leq\overline{t} \\
      -\phi_\tau^{\star} &\text{ if } \overline{t}<\tau<\nt
 \end{cases}.
\end{align*}
Now, we can lower-bound the norm of $\psi_t$:
\begin{align*}
   \| \psi_{t}\|^2 = \|\mathcal{M}_{t} \mathbf{e}_{t}\|^2\geq \underline{\sigma}^2(\mathcal{M}_{t})\|\mathbf{e}_{t}\|^2\geq \underline{\sigma}^2(P)\|\mathbf{e}_{t}\|^2.
\end{align*}
Furthermore, truncating each $\mathbf{e}_{t}$ up to $\tau=\overline{t}$,
\begin{align*}
\sum_{t=0}^{\nt-1}\|\mathbf{e}_{t}\|^2\geq & \sum_{t=0}^{\nt-1}\sum_{\tau=t}^{\overline{t}}\left\|\lambda\phi_{\tau|t}-\phi_\tau^{\star}\right\|^2\\
\geq &
\min_{\lambda\in \mathcal{I}}\left\|\lambda\boldsymbol{\varepsilon}- (1-\lambda)\boldsymbol{\phi}^{\star}\right\|^2,
\end{align*}
where $\boldsymbol{\varepsilon}$ and $\boldsymbol{\phi}^{\star}$ are stacked vectors defined in~\Cref{sec:problem_formulation}.
Solving this least-squares problem, we conclude that
\begin{align*}
J(\pi_{\lambda}) - J^{\star} \geq \underline{\lambda}(H)\underline{\sigma}^2(P)\varpi\left(\boldsymbol{\varepsilon},\boldsymbol{\phi}^{\star}\right).
\end{align*}

Finally, note that $\underline{\lambda}(H)= \min_{\|x\|=1}y^{\top} (R+B^{\top}PB)^{-1}y$ where $y\coloneqq B^{\top}x$. We know that for a positive definite matrix $M$, the quadratic form satisfies $y^{\top} M^{-1} y\geq\underline{\lambda}(M^{-1})\|y\|^2=\|y\|^2/\overline{\lambda}(M)$. Therefore, the proof completes by noting
\begin{align*}
\underline{\lambda}(H)=\min_{\|x\|=1}x^{\top} H x\geq \frac{\left\|B^{\top}x\right\|^2}{\overline{\lambda}\left(R+B^{\top}PB\right)}\geq \frac{\underline{\sigma}^2(B)}{\overline{\lambda}\left(R+B^{\top}PB\right)}.
\end{align*}
\end{proof}

\section{Practical Example: }
\label{sec:experiments}

To demonstrate the practical effectiveness of the proposed \ouralg algorithm, we conduct experiments on both a standard linear quadratic control (LQC) system and a $1$-dimensional nonlinear robotic arm model. All code and data to reproduce these experiments are publicly available\footnote{\url{https://github.com/tongxin-li/LAC}}. We set $\nt=200$ for all experiments, with a prediction horizon of $\nk=5$ for \texttt{MPC}, \texttt{Self-Tuning}~\cite{li2022robustness}, and a delay of $5$ with a learning rate $\beta=0.05$ for the \dcl procedure in \ouralg.

\subsection{Linear Quadratic Control}
\label{sec:linear_settings}

\begin{figure}[t]
    \centering    \includegraphics[width=1\linewidth]{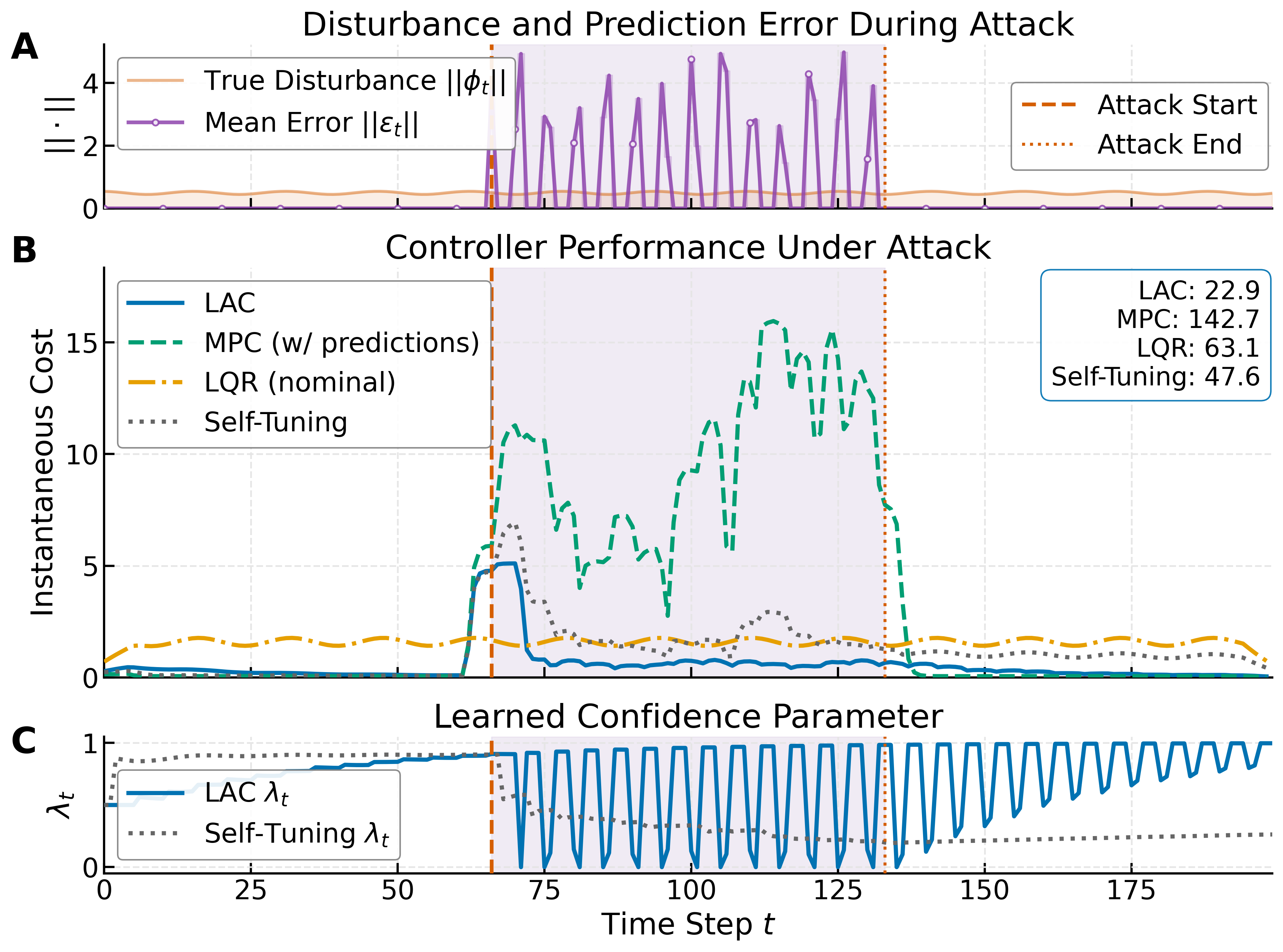}
    \caption{Linear trajectory tracking under adversarial prediction errors. \textsc{Top (\textbf{A}):} Prediction error during attack. 
     \textsc{Middle (\textbf{B}):} Instantaneous cost for each controller. 
        \textsc{Bottom (\textbf{C}):} Evolution of learned confidence parameter $\lambda_t$. Normalized total costs are also given on the right top of the subplot \textbf{B}.}
    \label{fig:attack}
\end{figure}

We first consider a constrained LQC problem to benchmark the robustness and adaptability of \ouralg against three baselines: Model Predictive Control with predictive parameters (\texttt{MPC}), Linear Quadratic Regulator (\texttt{LQR}) as a special case of MPC by setting nominal predictions of $(\phi_t^{\star}:t\in [\nt])$ as zeros, and a \texttt{Self-Tuning} controller~\cite{li2022robustness}. This problem involves controlling a robot whose movement is modeled as a discrete-time linear system with state $x_t = [p_t^{\top},v_t^{\top}]^{\top}\in\mathbb{R}^4$, where $p_t$ is position and $v_t$ is velocity. The controller must follow a fixed but unknown hypotrochoid trajectory $(y_t:t\in [\nt])$, defined by:
$$ y_t\coloneqq \begin{bmatrix} \cos(t/10)/2 + \cos(t/2)\\ \sin(t/10)/2 + \sin(t/2) \end{bmatrix}, \ \ t\in [\nt].$$
The robot's physical state consists of its position $p_t$ and velocity $v_t$, which evolve according to $p_{t+1}=p_t+ c_{1} v_t$ and $v_{t+1}=v_t+ c_{1} u_t$. To frame this as a tracking problem, we define the state as the tracking error $z_t \coloneqq p_t - y_t$, and the velocity $v_t$. This recasts the system into the canonical LQC state-space model  $x_{t+1} = Ax_t + Bu_t + \phi_t^{\star}$ where $x_t = [z_t^T, v_t^T]^T$ and:$$ A\coloneqq \begin{bmatrix} I_{2\times 2} & c_{1} \cdot I_{2\times 2}\\ 0_{2\times 2} & I_{2\times 2} \end{bmatrix}, \ B\coloneqq \begin{bmatrix} 0_{2\times 2} \\ c_{1} \cdot I_{2\times 2} \end{bmatrix}, \text{ and } \phi_t^{\star} \coloneqq Ay_t- y_{t+1}.$$ Furthermore, we impose a constraint $u_t\in [-u_{\max},u_{\max}]$ with $u_{\max}=10$ for all $t\in [\nt]$.
To minimize the tracking error while conserving energy, the quadratic cost is defined by the matrices $Q=\mathrm{diag}(1,1,0,0)$ and $R=I_{2\times 2}$.

\subsubsection{Cost vs. Prediction Error}
\label{sec:cost_vs_error}

To evaluate each controller's robustness against imperfect model knowledge, we designed an experiment to measure control cost as a function of prediction error. Prediction errors are introduced by adding a synthetic noise vector to the ground-truth $\nk$-step disturbance sequences. We set $c_{1}=0.2$. The synthetic noise vector is generated by first sampling a vector  from a multivariate Gaussian distribution $\mathcal{N}(I_{\nk\times\nk},\sigma^2 I_{\nk\times\nk})$ with $\sigma=0.5$ and the magnitude of this error is precisely controlled by normalizing and systematically varying its $\ell_2$-norm from 0 to 5 with a step size $0.1$. The mean error among $\nk$-step predictions is denoted by $\varepsilon_t$. 

Figure~\ref{fig:lqc} in~\Cref{sec:introduction} presents the control cost as a function of prediction error for all controllers. The \ouralg controller consistently achieves lower costs as prediction error increases, highlighting its ability to adaptively adjust confidence in the predictions. In contrast, the performance of classic \texttt{MPC} degrades rapidly with increasing error.  To ensure statistical significance, the total accumulated cost is averaged over 5 independent runs for each controller at every error level, with shadow regions displaying the range of the costs' standard deviation.

\subsubsection{Resilience to Adversarial Attacks}
\label{sec:exp_attack_linear}

To evaluate the controllers' resilience to fast fluctuating noise, we conduct an experiment where predictions are subjected to a temporary adversarial attack. We set $c_{1}=1$. For the first third of the simulation, disturbance predictions are perfect. During the middle third ($\nt/3 \leq t < 2\nt/3$), we simulate an intermittent attack by adding an adversarial error vector with a large $\ell_2$-norm $\|\varepsilon_t\|=4$ to the predictions (generated the same as in~\Cref{sec:cost_vs_error} with $\sigma=0.5$), whenever $(t \ \mathrm{ mod } \ 5)\in \{0,1\}$. In the final third, the attack ceases.

\begin{figure}[t]
    \centering    \includegraphics[width=1\linewidth]{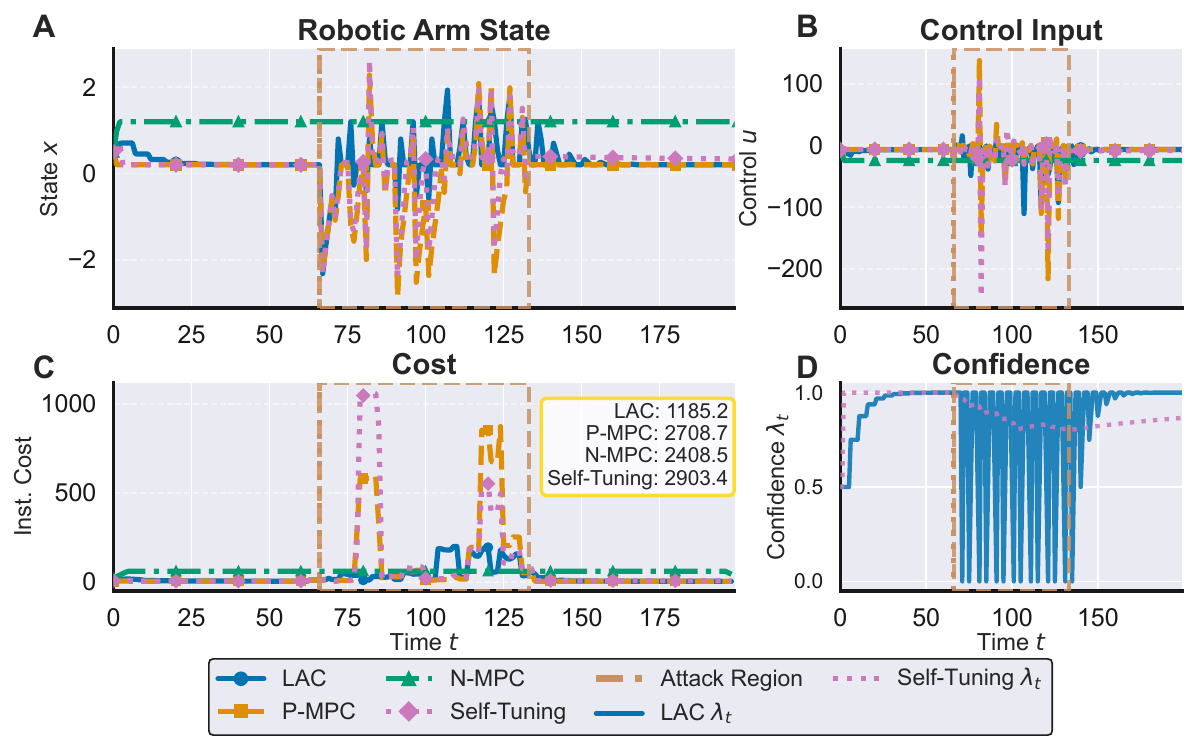}
    \caption{Performance and internal states of the controllers on the nonlinear robotic arm system under an adversarial prediction attack. (\textsc{\textbf{A}}) The state trajectory of the robotic arm over time. (\textsc{\textbf{B}}) The control input applied by each controller. (\textsc{\textbf{C}}) The instantaneous cost, with total accumulated costs annotated. (\textsc{\textbf{D}}) The learned confidence parameter $\lambda_t$ for two adaptive controllers, \ouralg and \texttt{Self-Tuning}. }
    \label{fig:robot_arms}
\end{figure}

The experimental results demonstrate \ouralg's superior robustness and adaptability. As shown in Figure~\ref{fig:attack}B, while the standard \texttt{MPC} controller becomes unstable and incurs catastrophic costs by naively trusting the corrupted predictions, \ouralg's performance remains stable and close to that of the conservative \texttt{LQR} baseline. The mechanism behind this resilience is revealed in Figure~\ref{fig:attack}C. I.e., upon detecting the attack, \ouralg's confidence parameter, 
$\lambda_t$, rapidly drops to nearly zero, effectively ignoring the adversarial predictions. By learning to distrust its compromised model, \ouralg achieves the lowest overall cost ($22.9$; normalized by $5$), thus verifying its ability to securely operate in unpredictable environments.

\subsection{Robot Arm Control}
\label{sec:robot_arm_model}

We further evaluate the controllers on a nonlinear robotic arm model. The system's dynamics function $f_t$ is given by:
\begin{align}
\label{eq:robotic_arm_dynamics}
    x_{t+1} = x_t + c_2 \sin(x_t) + c_3 u_t \exp(-|x_t|) + \phi_t^\star
\end{align}
with parameters $c_2=0.5$ and $c_3=0.2$. The state and input constraint sets are:
\begin{align}
\nonumber
    \mathsf{X} = \{ x \in \mathbb{R} : |x| \le 0.2 \}, \quad \mathsf{U} = \{ u \in \mathbb{R} : |u| \le 10^3 \}.
\end{align}
The cost for this system is defined as $c(x_t, u_t) = x_t^2 + c_4 u_t^2$, with $c_4=0.1$.
The system is subject to adversarial attacks on the disturbance predictions, generated the same as in \Cref{sec:exp_attack_linear} with a large $\ell_2$-norm $\|\varepsilon_t\|=4$ during a specified time window, as visualized in Figure~\ref{fig:robot_arms}. We compare \ouralg with the MPC with predictive parameters (\texttt{P-MPC}), MPC with nominal zero parameters (\texttt{N-MPC}), and
\texttt{Self-Tuning}, which is implemented using a linearized system $x_{t+1} = (1+c_2)x_t + c_3 u_t$ of~\eqref{eq:robotic_arm_dynamics} since it only applies to linear systems.

Figure~\ref{fig:robot_arms} provides a detailed breakdown of the state, control input, cost distribution, and confidence evolution for the robotic arm before, under, and after the attack. The \ouralg controller demonstrates superior resilience, with lower cost spikes and more stable confidence adaptation during the attack window.

\section{Conclusion}

In this work, we analyze Learning-Augmented Control (\ouralg), a framework that  integrates untrusted ML predictions into constrained, nonlinear MPC. By learning a confidence parameter $\lambda_t$ online via the Delayed Confidence Learning (\dcl) procedure, \ouralg adaptively balances between uncertain predictions and a safe nominal policy. We provide formal competitive ratio guarantees that prove this ``best-of-both-worlds'' behavior, a result supported by experiments showing that \ouralg remains robust and outperforms baselines like \texttt{MPC} and \texttt{LQR} under both graded prediction errors and direct fast fluctuating adversarial attacks.

Our framework, while robust, has several limitations that suggest avenues for future research. The current formulation relies on receiving predictions of model parameters and observing their true values after a delay. A key direction is to extend \texttt{LAC} to handle different information structures, such as learning from direct action advice provided by an external policy. Furthermore, our analysis assumes perfect observability post-delay. In future work, it would be interesting to address partial or noisy feedback by integrating techniques from system identification or state estimation into the confidence learning loop. Another promising avenue is to explore more advanced online learning schemes beyond online convex optimization to handle scenarios where the surrogate loss is non-convex. Finally, a key theoretical question left open is the tightness of our competitive ratio bound for general nonlinear systems. While this work establishes matching upper and lower bounds for the LQC case (Theorem~\ref{thm:lac_lqc} and \ref{thm:lqc_lower_bound}), thereby tightly characterizing its fundamental performance limits, our main result for the general setting (Theorem~\ref{thm:lac}) only provides an upper bound. A significant direction for future research is to establish a corresponding impossibility result, or a lower bound on the competitive ratio, for the general nonlinear case.

\bibliographystyle{ieeetr}
{\bibliography{main}}

\appendix

\subsection{Proof of Lemma~\ref{lemma:regret_error}}
\label{app:proof_lemma_regret_error}


The \textit{per-step action error} $e^{u}_t(x_t)$ at time $t\in [\nt]$ defined below characterizes the discrepancy between the MPC solution $u_t$ and the clairvoyant optimal action $u_t(x_t;\phi_{t:\nt}^{\star})$, both given the same system state $x_t$ generated by a given controller (in our context, it is the MPC scheme in~\eqref{eq:mpc1}-\eqref{eq:mpc2}),
\begin{align} 
\label{eq:per_step_error}
    e^{u}_t(x_t;\phi_{t:\overline{t}|t})\coloneqq \left\|u_t\left(x_t;\phi_{t:\overline{t}|t}\right) - u_t\left(x_t;\phi_{t:\nt}^{\star}\right)\right\|.
\end{align}
In future contexts, we use the shorthand $e_t^{u}(x_t)$ for $ e^{u}_t(x_t;\lambda \phi_{t:\overline{t}}+(1-\lambda) \kappa_{t:\overline{t}})$ if there is no ambiguity. Similarly, we define the \textit{per-step state error} $e_t^{x}(x_{t-1})$ as
\begin{align} 
\nonumber
    e^{x}_t(x_{t-1};\phi_{t:\overline{t}|t})\coloneqq \left\|x_{t}\left(x_{t-1};\phi_{t:\overline{t}|t}\right) - x_{t}\left(x_{t-1};\phi_{t:\nt}^{\star}\right)\right\|.
\end{align}

\begin{proof}[Proof of~\Cref{lemma:regret_error}]

Recall that $x_t(x_{\tau};\phi_{\tau:\nt}^{\star})$ represents the corresponding optimal state at time $t\in [\nt]\cup \{\nt\}$ induced by applying clairvoyant optimal actions $u_{\tau:t}(x_\tau;\phi_{\tau:\nt}^{\star})$ with a fixed starting state $x_\tau$ at time $\tau\leq t$. If $\tau=0$ and $x_0$ is the initial state, we simplify it as $x_t^{\star}$. Furthermore, let $x_t^{\pi}$ (or $u_t^{\pi}$) be the corresponding state (or action) at time $t$ induced by applying $\pi_{\lambda}$ where $\pi_{\lambda}(x_t)=u_t\left(x_t;\lambda \phi_{t:\overline{t}}+(1-\lambda) \kappa_{t:\overline{t}}\right) \ \text{($\lambda$-confident control)}$. Note that $x_0^{\star}\equiv x_0^{\pi}\equiv x_0$. Rewriting the state difference $x_t^{\pi} - x_t^{\star}$ into a telescoping sum, it follows that for all $t\in[T]\cup\{T\}$,
\begin{align}
\nonumber
\left\|x_t^{\pi} - x_t^{\star}\right\| \leq  & \left\|x_t^{\pi}-x_{t}(x_{t-1}^{\pi};\phi_{t-1:\nt}^{\star})\right\| \\ 
\label{eq:performance_diff_1}
& +\sum_{\tau=1}^{t-1}\left\|x_{t}(x_{\tau}^{\pi};\phi_{\tau:\nt}^{\star})-x_{t}(x_{\tau-1}^{\pi};\phi_{\tau-1:\nt}^{\star})\right\|.
\end{align}
The decomposition in~\eqref{eq:performance_diff_1} is from the performance difference lemma~\cite{kakade2002approximately} and the perturbation analysis in~\cite{lin2022bounded}.
Using the EDPB property in~\eqref{eq:perturbation_bound_state} of~\Cref{def:perturbation_bound}, there exists a mapping $\rho:\mathbb{N}\rightarrow\mathbb{R}_+$ such that $\sum_{t\in [\nt]}\rho(t)\leq C$ and $\rho(0)\geq 1$ such that
\begin{align*}
&\sum_{\tau=1}^{t-1}\left\|x_{t}(x_{\tau}^{\pi};\phi_{\tau:\nt}^{\star})-x_{t}(x_{\tau-1}^{\pi};\phi_{\tau-1:\nt}^{\star})\right\|\\
\leq &\sum_{\tau=1}^{t-1}\rho(t-\tau)\left\|x_{\tau}^{\pi}-x_{\tau}(x_{\tau-1}^{\pi};\phi_{\tau-1:\nt}^{\star})\right\|,
\end{align*}
which together with~\eqref{eq:performance_diff_1} give that for all $t\in[T]\cup\{T\}$,
\begin{align}
\label{eq:performance_diff_2}
    \left\|x_t^{\pi} - x_t^{\star}\right\| \leq\sum_{\tau=1}^{t}\rho(t-\tau)\left\|x_{\tau}^{\pi}-x_{\tau}(x_{\tau-1}^{\pi};\phi_{\tau-1:\nt}^{\star})\right\|.
\end{align}
Using the definition in~\eqref{eq:per_step_error}, as a result, 
\begin{align*}
\left\|x_{t}^{\pi}-x_{t}(x_{t-1}^{\pi};\phi_{t-1:\nt}^{\star})\right\|
= & e_{t}^{x}(x_{t-1}^{\pi}).
\end{align*}
Thus, the LHS of~\eqref{eq:performance_diff_2} becomes
\begin{align}
\label{eq:performance_diff_x}
     \left\|x_t^{\pi} - x_t^{\star}\right\| \leq \sum_{\tau=1}^{t}\rho(t-\tau) e_{\tau}^{x}(x_{\tau-1}^{\pi}).
\end{align}
Similar to~\eqref{eq:performance_diff_1}, we decompose the action difference $u_t^{\pi} - u_t^{\star}$ as
\begin{align}
\nonumber
\left\|u_t^{\pi} - u_t^{\star}\right\| \leq  & \left\|u_t^{\pi}-u_{t}(x_{t}^{\pi};\phi_{t:\nt}^{\star})\right\| \\ 
\label{eq:performance_diff_3}
& +\sum_{\tau=1}^{t}\left\|u_{t}(x_{\tau}^{\pi};\phi_{\tau:\nt}^{\star})-u_{t}(x_{\tau-1}^{\pi};\phi_{\tau-1:\nt}^{\star})\right\|.
\end{align}
Applying the EDPB property in~\eqref{eq:perturbation_bound_action} of~\Cref{def:perturbation_bound} implies
\begin{align}
\label{eq:performance_diff_4}
    \left\|u_t^{\pi} - u_t^{\star}\right\| \leq e_t^{u}(x_{t}^{\pi})+\sum_{\tau=1}^{t}\rho(t-\tau)e_{\tau}^{x}(x_{\tau-1}^{\pi}).
\end{align}

Continuing from~\eqref{eq:performance_diff_x}, by the Cauchy-Schwarz inequality,
\begin{align}
\nonumber
\left\|x_t^{\pi} - x_t^{\star}\right\|^2 \leq & \left(\sum_{\tau=1}^{t}\rho(t-\tau) e_{\tau}^{x}(x_{\tau-1}^{\pi})\right)^2\\
\label{eq:performance_diff_x_2}
\leq & C \sum_{\tau=1}^{t}\rho(t-\tau) \left(e_{\tau}^{x}(x_{\tau-1}^{\pi})\right)^2.
\end{align}
Similarly for~\eqref{eq:performance_diff_4},
\begin{align}
\nonumber
\left\|u_t^{\pi} - u_t^{\star}\right\|^2 \leq & \left(e_t^{u}(x_{t}^{\pi})+\sum_{\tau=1}^{t}\rho(t-\tau)e_{\tau}^{x}(x_{\tau-1}^{\pi})\right)^2\\
\label{eq:performance_diff_u_2}
\leq & 2\left(e_t^{u}(x_{t}^{\pi})\right)^2 + 2C \sum_{\tau=1}^{t}\rho(t-\tau) \left(e_{\tau}^{x}(x_{\tau-1}^{\pi})\right)^2.
\end{align}

Since the costs are non-negative, convex, and $\ell$-smooth, Lemma F.2 in~\cite{lin2021perturbation} implies that for any $\eta>0$, we can bound the cost difference $J(\pi_{\lambda}) - (1+\eta)J^{\star}$ from above by
\begin{align}
\nonumber
\frac{\ell}{2}\left(1+\frac{1}{\eta}\right)\left(\sum_{t=1}^{\nt}\left\|x_t^{\pi}-x_t^{\star}\right\|^2 + \sum_{t=0}^{\nt-1}\left\|u_t^{\pi}-u_t^{\star}\right\|^2\right).
\end{align}
Applying~\eqref{eq:performance_diff_x_2} and~\eqref{eq:performance_diff_u_2}, the RHS of above is bounded by
\begin{align*}
&\frac{\ell}{2}\left(1+\frac{1}{\eta}\right)\sum_{t=0}^{\nt}\left(2\left(e_t^{u}(x_{t}^{\pi})\right)^2+3C \sum_{\tau=1}^{t}\rho(t-\tau) \left(e_{\tau}^{x}(x_{\tau-1}^{\pi})\right)^2\right)\\
\leq &\frac{\ell}{2}\left(1+\frac{1}{\eta}\right)\left(2\sum_{t=0}^{\nt-1}\left(e_t^{u}(x_{t}^{\pi})\right)^2+3C^2\sum_{t=1}^{\nt}\left(e_t^{x}(x_{t-1}^{\pi})\right)^2\right)
\end{align*}
by letting $e_{\nt}^{u}(x_{\nt}^{\pi})\equiv 0$. 
Using the sensitivity bounds in~\eqref{eq:sensitivity_bound_action} and~\eqref{eq:sensitivity_bound_state}, both $\sum_{t=0}^{\nt-1} \left(e_t^{u}(x_t^{\pi})\right)^2$ and $\sum_{t=1}^{\nt} \left(e_t^{x}(x_{t-1}^{\pi})\right)^2$ satisfy the following upper bound:
\begin{align}
\nonumber
&\sum_{t=0}^{\nt-1}\left(\sum_{\tau=t}^{\overline{t}}\rho(\tau-t)\left\|\lambda \phi_{\tau|t} +(1-\lambda)\kappa_{\tau|t} - \phi_{\tau}^{\star}\right\|+\gamma\rho(\nk)\right)^2\\
\label{eq:upper_bound_1}
\leq & \sum_{t=0}^{\nt-1}\left(\sum_{\tau=t}^{\overline{t}}\rho(\tau-t)\left\|\lambda \varepsilon_{\tau|t} +(1-\lambda)\overline{\varepsilon}_{\tau|t}\right\|+\gamma\rho(\nk)\right)^2 .
\end{align}
Rearranging the terms,~\eqref{eq:upper_bound_1} satisfies
\begin{align}
\nonumber
\frac{\eqref{eq:upper_bound_1}}{2}\leq & \nt\gamma^2\rho^2(\nk) + \sum_{t=0}^{\nt-1}\Big(\sum_{\tau=t}^{\overline{t}} \rho(\tau-t)\left\|\lambda \varepsilon_{\tau|t} +(1-\lambda)\overline{\varepsilon}_{\tau|t}\right\|  \Big)^2.
\end{align}
Hence, it follows that
\begin{align}
\nonumber
\sum_{t=0}^{\nt-1} \left(e_t^{u}(x_t^{\pi})\right)^2 &\leq  2\gamma^2\rho^2(\nk)\nt + 
2\xi(\lambda),\\
\nonumber
\sum_{t=1}^{\nt} \left(e_t^{x}(x_{t-1}^{\pi})\right)^2 &\leq  2\gamma^2\rho^2(\nk)\nt + 
2\xi(\lambda).
\end{align}
Consequently, for any $\eta>0$, $J(\pi_{\lambda}) - (1+\eta)J^{\star}$ can be bounded from above by
\begin{align}
\nonumber
  &  \frac{\ell}{2}\left(1+\frac{1}{\eta}\right)\left(3C^2+2\right)\sum_{t\in [\nt]} \left(e_t^{u}(x_t^{\pi})\right)^2  \leq  \frac{(1+\eta)(3C^2+2)}{\eta}\ell\\
\label{eq:upper_bound_noncanonical}
& \quad \cdot \left(  \gamma^2\rho^2(\nk)\nt +
\xi(\lambda)\right).
\end{align}

\end{proof}

\subsection{Proof of Lemma~\ref{lemma:dcl}}
\label{app:proof_lemma_dcl_regret}

For notional simplicity, we omit $\nt$ in $\xi_{t,\nt}(\lambda)$ and write it as $\xi_{t}(\lambda)$.
First, relabeling the time indices, the regret of \dcl can be represented as 
\begin{align*}
\mathsf{Regret}(\dcl) \coloneqq & \sum_{t=0}^{\nt-1}\xi_{t}\left(\lambda_t\right) -  \min_{\lambda\in\mathcal{I}}\sum_{t=0}^{\nt-1}\xi_{t}\left(\lambda\right)\\
= & \sum_{j=0}^{\nk-1}\sum_{t\in \mathcal{\nt}_j}\xi_{t}\left(\lambda_t\right) -  \min_{\lambda\in\mathcal{I}}\sum_{j=0}^{\nk-1}\sum_{t\in \mathcal{\nt}_j}\xi_{t}\left(\lambda\right),
\end{align*}
where $\mathcal{\nt}_j\coloneqq\left\{j+\nk i<\nt: i\in\mathbb{N}\right\}$ satisfies $|\mathcal{\nt}_j|\leq {\nt}/{\nk}+1$. 

\begin{proof}[Proof of~\Cref{lemma:dcl}]
The delayed 
update rule in Algorithm~\ref{alg:dcl} is equivalent to the implementation of $\nk$ subroutines on $\mathcal{\nt}_j$ for $j=0,\ldots,\nk-1$, with each $j$-th subroutine runs the following update rule:
\begin{align*}
\lambda_i^{(j)} = \mathsf{Proj}_{\mathcal{I}}\left(\lambda_{i-1}^{(j)} + {\beta}\nabla_{\lambda}\xi_{i-1}^{(j)}(\lambda)\right), \ i\in\mathbb{N}
\end{align*}
so that $\lambda_i^{(j)}\equiv\lambda_{j+\nk i}$ ($j+\nk i<\nt$).
Therefore, applying the analysis of black-box online learning algorithms in
\cite{weinberger2002delayed,joulani2013online}, the regret of \dcl can be bounded by a summation of $k$ regret corresponding to $\nk$ subroutines such that
\begin{align}
\label{eq:dcl_bound_1}
\mathsf{Regret}(\dcl)\leq \sum_{j=0}^{\nk-1}\left(\sum_{t\in \mathcal{\nt}_j}\xi_{t}\left(\lambda_t\right)-\min_{\lambda\in\mathcal{I}}\sum_{t\in \mathcal{\nt}_j}\xi_{t}\left(\lambda\right)\right).
\end{align}

Now, we bound the regret of the subroutines.  By the definition of $\xi_{t}(\lambda)$, $\nabla_{\lambda}\xi_{t}(\lambda)$ satisfies
\begin{align}
\nonumber
  \left|\nabla_{\lambda}\xi_{t}(\lambda)\right|\leq  & 2\left(\sum_{\tau=t}^{\overline{t}}\rho(\tau-t)\left\|\lambda \varepsilon_{\tau|t} +(1-\lambda)\overline{\varepsilon}_{\tau|t}\right\|\right) \\
  \label{eq:proof_gradient}
  & \cdot \left(\sum_{\tau=t}^{\overline{t}}\rho(\tau-t)\|\varepsilon_{\tau|t}-\overline{\varepsilon}_{\tau|t}\|\right)\leq 2C^2\gamma^2,
\end{align}
using the Cauchy–Schwarz inequality.
Since $\xi_{t,\nt}(\lambda)$ is a convex function of $\lambda$ for all $t\in [\nt]$, the online mirror descent update rule in Algorithm~\ref{alg:dcl} is equivalent to the follow the regularized leader (FTRL) policy~\cite{hazan2022introduction} with an $\ell_2$-norm regularizer, leading to the following regret bound for each $j$-th subroutine
\begin{align}
\nonumber
\sum_{t\in \mathcal{\nt}_j}\xi_{t}\left(\lambda_t\right)-\min_{\lambda\in\mathcal{I}}\sum_{t\in \mathcal{\nt}_j}\xi_{t}\left(\lambda\right)\leq & \frac{1}{2\eta} + 2\eta\sum_{t\in\mathcal{\nt}_j}|\nabla_{\lambda}\xi_{t}(\lambda)|^2\\
\label{eq:dcl_bound_2}
\leq & 4C^2\gamma^2\sqrt{\nt/\nk+1}.
\end{align}

Therefore, the selection of $\lambda_t$ at each time $t\in [\nt]$ follows from a delayed online learning procedure \dcl, which can be regarded as the implementation of the $\nk$ online optimization subroutines. Combining~\eqref{eq:dcl_bound_1} and~\eqref{eq:dcl_bound_2},
\begin{align}
\mathsf{Regret}(\dcl)\leq 4C^2\gamma^2\sqrt{\nt\nk+\nk^2}. 
\end{align}
\end{proof}
\end{document}